\documentclass[12pt]{article}
\usepackage{authblk}

\addtolength{\oddsidemargin}{-.5in}%
\addtolength{\evensidemargin}{-.5in}%
\addtolength{\textwidth}{1in}%
\addtolength{\textheight}{1.3in}%
\addtolength{\topmargin}{-.8in}%
\usepackage{float}
\usepackage{amsmath}
\usepackage{amsthm}
\usepackage{amssymb}
\usepackage{setspace}
\usepackage[authoryear]{natbib}
\usepackage[dvipdfmx]{graphicx}
\usepackage{color}
\usepackage{mathrsfs}
\usepackage{comment}

\onehalfspacing
\usepackage[unicode=true,
 bookmarks=true,bookmarksnumbered=false,bookmarksopen=false,
 breaklinks=false,pdfborder={0 0 1},backref=page,colorlinks=true]
 {hyperref}
\hypersetup{citecolor=Emerald,  linkcolor=RoyalBlue}

\makeatletter

\floatstyle{ruled}
\newfloat{algorithm}{tbp}{loa}
\providecommand{\algorithmname}{Algorithm}
\floatname{algorithm}{\protect\algorithmname}


\numberwithin{equation}{section}
\numberwithin{figure}{section}
\theoremstyle{definition}

\theoremstyle{remark}
\newtheorem{rem}{\protect\remarkname}


\usepackage[dvipsnames]{xcolor}
\AtBeginDocument{%
\let\ref\autoref
\renewcommand\equationautorefname{\@gobble}
}
\providecommand{\examplename}{Example}
\providecommand{\remarkname}{Remark}
\newtheorem{prop}{Proposition}

\makeatother

\begin{document}
\title{
Adaptation of the Tuning Parameter in General Bayesian Inference with Robust Divergence  
}
\author[1,3]{Shouto Yonekura}
\author[2,3]{Shonosuke Sugasawa}
\affil[1]{Graduate School of Social Sciences, Chiba University}
\affil[2]{Center for Spatial Information Science, The University of Tokyo}
\affil[3]{Nospare Inc.}

\maketitle
\begin{abstract}
We introduce a novel methodology for robust Bayesian estimation with robust divergence (e.g., density power divergence or $\gamma$-divergence), indexed by tuning parameters. It is well known that the posterior density induced by robust divergence gives highly robust estimators against outliers if the tuning parameter is appropriately and carefully chosen. In a Bayesian framework, one way to find the optimal tuning parameter would be using evidence (marginal likelihood).
However, we theoretically and numerically illustrate that evidence induced by the density power divergence does not work to select the optimal tuning parameter since robust divergence is not regarded as a statistical model. To overcome the problems, we treat the exponential of robust divergence as an unnormalisable statistical model, and we estimate the tuning parameter by minimising the Hyvarinen score. We also provide adaptive computational methods based on sequential Monte Carlo (SMC) samplers, enabling us to obtain the optimal tuning parameter and samples from posterior distributions simultaneously. The empirical performance of the proposed method through simulations and an application to real data are also provided.
\end{abstract}

\bigskip\noindent
{\bf Keywords}: General Bayes; robustness; tuning parameter estimation; density power divergence; sequential Monte Carlo.

\newpage
\doublespacing

\section{Introduction}
\label{sec:Introduction}
One well-known way to deal with outliers and model misspecification when conducting inference is to use
robust divergences.
Since the pioneering work of \citet{basu1998robust} that proposed density power divergence as an extension of the standard likelihood, some variants of the divergence \citep[e.g.][]{fujisawa2008robust,cichocki2011generalized,ghosh2017generalized} and various statistical methods using robust divergences have been developed.
Many robust divergences are indexed by a single tuning parameter that controls the robustness against outliers. 
If the tuning parameter is set to a smaller value than necessary, the resulting estimator may still be affected by outliers. 
On the other hand, using an unnecessarily large value for the tuning parameter leads to a loss of statistical efficiency \citep{basu1998robust}.
Despite the success of the theoretical analysis of properties of statistical methods based on robust divergences, how to adaptively estimate the tuning parameter from the data has often been ignored, with a few exceptions \citep{warwick2005choosing,basak2021optimal} that propose frequentist methods to select the optimal value via asymptotic mean squared errors. 
There is a growing body of literature on Bayesian approaches using robust divergence.
For example, general theory has been considered by \citep{ghosh2016robust,jewson2018principles,nakagawa2020robust} and some specific applications to statistical models such as linear regression \citep{hashimoto2020robust}, change point detection \citep{knoblauch2018doubly} and Bayesian filtering \citep{boustati2020generalized}.
Nevertheless, a reasonable estimation strategy for the tuning parameter has not been carefully discussed. 
A natural consideration to find the best tuning parameter in the context of Bayesian statistics will be the use of model evidence or marginal likelihood. 
However, as we shall illustrate later, evidence is not useful for choosing the tuning parameter since the exponential of robust divergence cannot be directly interpreted as a normalised statistical model.

In this paper, this issue is addressed by taking advantage of ideas from statistical theories for unnormalised statistical models \citep{hyvarinen2005estimation} introducing the Hyvarinen score (H score), which is a finite version of Fisher divergence.
Based upon the idea of \citet{hyvarinen2005estimation}, \citet{dawid2015bayesian,shao2019bayesian} consider unnormalisable marginal likelihoods, with particular attention to model selection, where such unnormalisability is driven by the improper prior.
Our main idea is to regard the exponential of robust divergence as unnormalisable models and employ a posterior distribution driven by the H-score, inspired by \citet{dawid2015bayesian, shao2019bayesian}, as an objective function of $\gamma$.
Since our objective function cannot be computed analytically in general, we then take advantage of sequential Monte Carlo (SMC) samplers \citep{del2006sequential} within a Robbins-Monro stochastic gradient framework to approximate the objective function, to estimate the optimal tuning parameter and to obtain samples from posterior distributions.

Therefore, our work can be understood as an attempt to fill the current gap between the existing theory of such robust estimation and their practical implementation. Our proposed method has the following advantages over existing studies.
\begin{enumerate}
    \item Unlike existing methods \citep{warwick2005choosing,basak2021optimal}, our proposed method does not require a pilot plot.
     To optimise a tuning parameter, it is necessary to determine a certain value as a pilot estimate. 
     Thus, the estimates may be strongly dependent on the pilot estimate. In addition, such methods often estimate excessively large values of the tuning parameters, given the proportion of outliers in the data. 
     In contrast, our algorithm is stable and statistically efficient since that does not require a pilot estimate.
    \item Proposed methods in \citep{warwick2005choosing,basak2021optimal} require an analytical representation of the asymptotic variance that cannot be obtained in general.
    Compared to such methods, our proposed method does not require such expression, and therefore our method can be applied to rather complex statistical models, which seem difficult to be handled by the previous methods.
    \item We take advantage of SMC samplers \citep{del2006sequential} in the generalised Bayesian framework \citep{bissiri2016general} with a gradient-ascent approach to perform parameter inference for the tuning parameter and posterior sampling simultaneously.
    This is a unique favourable algorithmic characteristic compared to methods that estimate parameters by fixing tuning parameters or by estimating tuning parameters once and then estimating other objects of interest.
\end{enumerate}

Recently, \cite{jewson2021general} has introduced a new Bayesian framework called {\it H-posterior} for unnormalisable statistical models based on Fisher divergence, and they have developed model selection criterion via the Laplace approximation of the marginal likelihood. 
The biggest difference from \cite{jewson2021general} is that we use a natural form of general posterior based on robust divergence, which is widely adopted in the literature \citep[e.g.][]{ghosh2016robust,jewson2018principles}, while the form of the posterior distribution in \cite{jewson2021general} is different from ours. 
As we mentioned, the main contribution of their work is the construction of model selection criteria of the BIC type through the Laplace approximation and the proof of their consistency.
On the other hand, our research is about the estimation of tuning parameters and the inference of the posterior distribution, and the main objective is to propose an objective function and a computational method considered suitable for this purpose.
In the framework of Generalised Bayesian, some methods that use variational Bayesian inference \citep{knoblauch2019generalized,frazier2021loss} have also been proposed.
However, instead of computational speed, the approximate distribution obtained does not match the target distribution in the limit, and the method of estimating the tuning parameters is unclear. 
There is also a need to make some natural but somewhat stronger assumptions than our proposed method, such as that the target distribution is an exponential family.

The rest of the paper is organised as follows. 
In \ref{sec:robustBayse}, we first set up the framework and then show theoretically and numerically that evidence induced by density power divergence to select the tuning parameter. Instead, we propose to estimate it based on the H-score \citep{hyvarinen2005estimation,dawid2015bayesian} and characterise its asymptotic behaviour. As mentioned earlier, our method involves functions for which it is difficult to obtain an analytic representation. Therefore, we develop an adaptive and efficient Markov chain Monte
Carlo (MCMC) algorithm based on SMC samplers \citep{del2006sequential} in \ref{sec:SMCsampler}. Numerical applications of the developed methodology are provided in \ref{sec:Numerical-examples}, then conclusions and directions for future research are provided in \ref{sec:coclusion}.

\section{Bayesian Inference with Robust Divergence } 
\label{sec:robustBayse}
\subsection{General posterior distribution}

Suppose that we have $d_{y}$-dimensional $i.i.d.$ data $\{y_{i}\}_{i=1}^{n}\overset{i.i.d.}{\sim}G$
where $G$ denotes the true distribution or the data-generating process.
Also, assume that we have a statistical model $\{f_{\theta}:\theta\in\Theta\}$
where $\Theta\subseteq\mathbb{R}^{d}$ for some $d\geq1.$ We then
write $y_{1:n}$ to denote $(y_{1},\ldots,y_{n}$) and let $g$
denote the density of $G$ with respect to $dy$. 
To make robust Bayesian inferences for $\theta$, we use a potential function based on robust divergence instead of the standard likelihood function.

Here, we simply consider \emph{the density power divergence} \citep{basu1998robust} but other ones with tuning parameters, such as $\gamma$-divergence \citep{fujisawa2008robust, nakagawa2020robust}, $\alpha$-divergence \citep{cichocki2010families}, H\"older divergence \citep{nielsen2017holder}, also can be used within our framework. 
Given a prior density
$\pi(\theta)$ with respect to $d\theta$, we can define the corresponding posterior density
\begin{align}
\text{\ensuremath{\Pi}}_{\gamma}(\theta\mid y_{1:n}):= & \frac{\mathcal{L}_{\gamma}(y_{1:n};\theta)\pi(\theta)}{p_{\gamma}(y_{1:n})},\label{eq:posterior}
\end{align}
where $\gamma\in(0,\infty]$, $p_{\gamma}(y_{1:n}):=\int_{\Theta}\mathcal{L}_{\gamma}(y_{1:n};\theta)\pi(\theta)d\theta$
and 
\begin{align}
\log\mathcal{L}_{\gamma}(y_{1:n};\theta) & :=\sum_{i=1}^n\log\mathcal{L}_{\gamma}(y_i;\theta), \ \ \ \ 
\log\mathcal{L}_{\gamma}(y_i;\theta):=\frac{1}{\gamma}f_{\theta}(y_{i})^{\gamma}-\frac{1}{1+\gamma}\int f_{\theta}(x)^{1+\gamma}dx.
\label{eq:Beta_likeli-1}
\end{align}

Note that $p_{\gamma}(y_{1:n})$ is generally referred to as \emph{evidence}.
In many scenarios, robustified posterior densities such as \eqref{eq:posterior} give much more accurate and stable inference against outliers and theoretical properties of the posterior have been investigated in \citep{ghosh2016robust,nakagawa2020robust}.
However, its performance depends critically on the choice of the tuning parameter $\gamma$ in \eqref{eq:posterior} \citep[e.g.][]{ghosh2016robust}, which motivate us to find \textquotedblleft best\textquotedblright{}
$\gamma$ to make inference successful. 
Notice that \eqref{eq:posterior} can be seen as a special case of general Bayesian updating \citep{bissiri2016general} with weight setting $1$.
As noted in \cite{jewson2018principles}, the density power divergence does not have any arbitrariness in the scale as a loss function, and one can set $\omega=1$.
Under the general framework of Bayesian updating, Corollary 1 of \citet{fong2020marginal} implies that evidence is still the unique coherent marginal score for Bayesian inference.
Thus, from the viewpoint of Bayesian statistics, it appears to be natural to find the best $\gamma$ based on evidence, but its property of it is unclear since $\mathcal{L}_{\gamma}(y_{1:n};\theta)$ is not a probability density of $y_{1:n}$. 
Furthermore, the tuning parameter $\gamma$ cannot be interpreted as \textquotedblleft model parameter\textquotedblright{} in this case. 
The following example highlights the problem of using $p_{\gamma}(y_{1:n})$ to find the best $\gamma$.

\subsection{Failure of model evidence: motivating example}\label{egxamle:toyexample} 
To see why evidence is not useful for estimating $\gamma$, we start with the following proposition for a rescaled $\log\mathcal{L}_{\gamma}(y_{i};\theta)$. 

\begin{prop} \label{prop:monotones}
Consider $\log\mathcal{L}^{\mathbf{R}}_{\gamma}(y_{i};\theta):=\log\mathcal{L}_{\gamma}(y_{i};\theta)-\frac{1}{\gamma}+1$. Furthermore, assume that $f(x)\leq1$ for any $x$. 
Then $\log\mathcal{L}^{\mathbf{R}}_{\gamma}(y_{i};\theta)$ is a monotonically increasing function of $\gamma$.

\end{prop}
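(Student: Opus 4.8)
The plan is to show that $\partial_\gamma \log\mathcal{L}^{\mathbf{R}}_{\gamma}(y_i;\theta)\ge 0$ for every $\gamma>0$. First I would absorb the rescaling constant. Writing $a:=f_\theta(y_i)$, the term $\frac1\gamma a^\gamma-\frac1\gamma$ collapses to $\frac1\gamma(a^\gamma-1)$, while the trailing $+1$ is irrelevant for monotonicity, so that
\[
\log\mathcal{L}^{\mathbf{R}}_{\gamma}(y_i;\theta)=h_1(\gamma)+h_2(\gamma)+1,
\qquad
h_1(\gamma):=\frac{a^{\gamma}-1}{\gamma},
\quad
h_2(\gamma):=-\frac{1}{1+\gamma}\int f_\theta(x)^{1+\gamma}\,dx.
\]
It then suffices to prove that $h_1$ and $h_2$ are each nondecreasing in $\gamma$; their sum is nondecreasing and the claim follows.

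For $h_1$, I would set $u:=\log a\le 0$ (this is where the hypothesis $f_\theta\le 1$ enters) and $t:=\gamma u\le 0$. A direct differentiation gives $h_1'(\gamma)=\gamma^{-2}\bigl(e^{t}(t-1)+1\bigr)$, so the sign of $h_1'$ is governed by the auxiliary function $\phi(t):=e^{t}(t-1)+1$. I would then note that $\phi(0)=0$ and $\phi'(t)=t e^{t}\le 0$ for $t\le 0$, so $\phi$ is nonincreasing on $(-\infty,0]$ and hence $\phi(t)\ge\phi(0)=0$ there, giving $h_1'(\gamma)\ge 0$. It is worth stressing that this is precisely where the rescaling by $-1/\gamma$ does the essential work: without it the bare term $\frac1\gamma a^{\gamma}$ is in fact \emph{decreasing} in $\gamma$ when $a\le 1$, and it is the subtraction that flips the monotonicity.

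For $h_2$, I would differentiate under the integral sign (valid under the usual dominated-convergence/regularity conditions on $f_\theta$), obtaining
\[
\frac{d}{d\gamma}\,\frac{1}{1+\gamma}\int f_\theta^{1+\gamma}\,dx
=\frac{1}{1+\gamma}\left[\int f_\theta^{1+\gamma}\log f_\theta\,dx-\frac{1}{1+\gamma}\int f_\theta^{1+\gamma}\,dx\right].
\]
Because $f_\theta\le 1$ forces $\log f_\theta\le 0$, both bracketed integrals are nonpositive, so the right-hand side is $\le 0$; equivalently $h_2'(\gamma)\ge 0$. Summing the two monotonicity statements completes the argument.

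The $h_2$ computation is immediate once $f_\theta\le 1$ is invoked, so I expect the only delicate points to be the sign analysis of $h_1$ — recognising that the correct object is the auxiliary function $\phi$, and that the rescaling is exactly what turns $h_1$ from decreasing into increasing — together with the mild integrability assumptions on $f_\theta$ and its powers needed to justify differentiating $\int f_\theta^{1+\gamma}\,dx$ under the integral sign over the relevant range of $\gamma$.
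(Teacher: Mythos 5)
Your proof is correct and rests on the same decomposition as the paper's: both split $\log\mathcal{L}^{\mathbf{R}}_{\gamma}(y_i;\theta)$ into $h_1(\gamma)=\gamma^{-1}\bigl(f_\theta(y_i)^\gamma-1\bigr)$ and $h_2(\gamma)=-(1+\gamma)^{-1}\int f_\theta(x)^{1+\gamma}dx$ (plus a constant) and show each piece is nondecreasing. The differences are in how the two pieces are handled, and they trade off in an interesting way. For $h_2$, the paper avoids differentiation under the integral sign entirely: since $f_\theta\le 1$, the integrand $f_\theta(x)^{1+\gamma}$ is pointwise nonincreasing in $\gamma$, so $\int f_\theta(x)^{1+\gamma}dx$ is nonincreasing; multiplying by the positive, decreasing factor $(1+\gamma)^{-1}$ keeps the product nonincreasing, hence $h_2$ is nondecreasing. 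This comparison argument buys exactly what you flag at the end as the remaining delicate point of your route --- it needs no integrability or regularity hypotheses beyond $f_\theta$ being a density, whereas your derivative computation does. For $h_1$ the situation is reversed: the paper simply \emph{asserts} that $\gamma^{-1}f_\theta(y_i)^{\gamma}-\gamma^{-1}$ is increasing, while your analysis via $\phi(t)=e^{t}(t-1)+1$ actually proves it, so your write-up fills in a step the paper leaves implicit. One small correction to your commentary: the hypothesis $f_\theta\le 1$ does not really ``enter'' at $h_1$, contrary to your parenthetical. Your own auxiliary function shows this: $\phi'(t)=te^{t}\ge 0$ for $t\ge 0$ as well, so $\phi(t)\ge\phi(0)=0$ on all of $\mathbb{R}$, and hence $h_1$ is nondecreasing for arbitrary $f_\theta(y_i)>0$ --- which is precisely the paper's claim (``for arbitrary $f_{\theta}(y_{i})$''). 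The assumption $f_\theta\le 1$ is genuinely needed only for the integral term $h_2$.
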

\begin{proof}
See \ref{sec:proofmonotone}.
\end{proof}
Since the term $-\frac{1}{\gamma}+1$ is eliminated when considering the posterior distribution (\ref{eq:posterior}), this rescaling is a non-essential modification in the method we shall propose later. 
The meaning of the rescaling is to ensure that $\log\mathcal{L}^{\mathbf{R}}_{\gamma}(y_{i};\theta)$ converges to the log-likelihood as $\gamma\to 0$, so that $\log\mathcal{L}^{\mathbf{R}}_{\gamma}(y_{i};\theta)$ can be regarded as a natural extension of the log-likelihood.

The important point here is that Proposition \ref{prop:monotones} implies that there are theoretically at least some situations where evidence is increasing monotonically for $\gamma$. 
Indeed, the following numerical example vividly illustrates such a situation.
To see this numerically, we consider a simple but motivating example in which $\{y_{i}\}_{i=1}^{100}\overset{i.i.d.}{\sim}G=\mathcal{N}(1,1)$
and then randomly replace $\tau\%$ of $\{y_{i}\}_{i=1}^{100}$ by $y_{i}+5$, where $0\leq\tau\leq100$ is called the contamination proportion. 
Here $\gamma$ was determined by dividing equally $[0.01,1]$ into 1,000 points. In other words, in the context of Bayesian model selection, this corresponds to choosing the model with the largest evidence as to the best model out of 1,000 models indexed by $\gamma$. 
With the choice $\tau=10$, we then calculated $2,000$ Monte Carlo estimates of the model evidence $p_{\gamma}(y_{1:n})$
for each $\gamma$. The resulting $p_{\gamma}(y_{1:n})$ are shown in \ref{fig:Example1}, which numerically shows that $p_{\gamma}(y_{1:n})$ is a monotonically increasing function of $\gamma$ so that it does not have local maxima. 
This implies that one cannot estimate $\gamma$ using $p_{\gamma}(y_{1:n})$.
A similar phenomenon is also discussed in \cite{jewson2021general}.

\begin{figure}[H]
\begin{centering}
\includegraphics[scale=0.6]{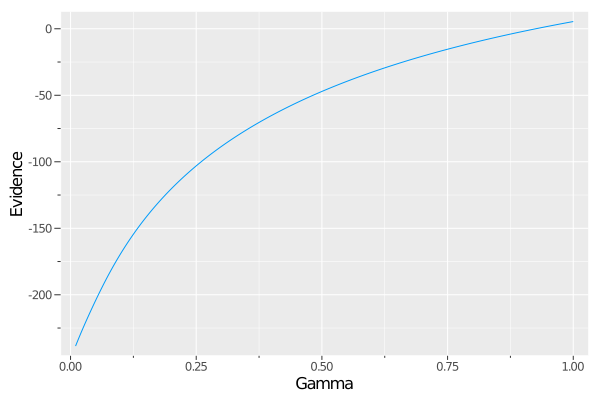}
\par\end{centering}
\vspace{-0.3cm}
\caption{Estimated $p_{\gamma}(y_{1:n})$. The Y-axis represents the value of $p_{\gamma}(y_{1:n})$, and the X-axis represents the value of $\gamma$.}
\label{fig:Example1}
\end{figure}

\subsection{Estimation using H-score}
To overcome the illustrated problem, we first treat $\mathcal{L}_{\gamma}(y_{1:n};\theta)$ as an unnormaliseable statistical model motivated by \citet{hyvarinen2005estimation}.
Note that even with an unnormalisable model, the update in \eqref{eq:posterior} can be considered as a valid belief update according to \cite{bissiri2016general}.
It should be noted that when $\log\mathcal{L}_{\gamma}(y_{1:n};\theta)$ is the density power divergence of the form (\ref{eq:Beta_likeli-1}), the normalising constant may not exist.
For example, when $f_{\theta}$ is a normal distribution, $\log\mathcal{L}_{\gamma}(y_i;\theta)$ converges to a constant value under $|y_i|\to \infty$, so the integral of $\mathcal{L}_{\gamma}(y_{1:n};\theta)$ with respect to $y_{1:n}$ diverges.   
Recently, \cite{jewson2021general} has pointed out that the role of such unnormalisable models can be recognised in terms of relative probability. 

For $d_y$ dimensional observations $y$ and twice differentiable density $p(\cdot)$, \citet{hyvarinen2005estimation} defines the H-score as
\begin{equation*}
    \mathcal{H}(y,p):=\sum_{k=1}^{d_y}\left\{2\frac{\partial^2\log p(y)}{\partial y_{(k)}^2}+\left( \frac{\partial \log p(y)}{\partial y_{(k)}}\right)^2\right\}.
\end{equation*}
We then select the optimal $\gamma$ with the smallest leave-one-out H-score, defined as 
\begin{equation}
    \sum_{i=1}^n \mathcal{H}(y_i,p_{\gamma}(y_i|y_{-i})), \label{eq:Hscore}
\end{equation}
where $p_{\gamma}(y_i|y_{-i})=\int \mathcal{L}_{\gamma}(y_{i};\theta)\Pi_{\gamma}(\theta|y_{-i})d\theta$ and $y_{-i}=(y_1,...,y_{i-1},y_{i+1},...,y_n)$.
Note that \citet{shao2019bayesian} adoptes the H-score to define prequential score for state space models, and the criteria (\ref{eq:Hscore}) can be seen as prequential score under {\it i.i.d.} settings. 
As shown in \ref{sec:derivation}, under the assumptions stated in \citet{shao2019bayesian}, the leave-one-out H-score \eqref{eq:Hscore} can be rewritten as 
\begin{equation}
\mathcal{H}_{n}(\gamma)
:=\sum_{i=1}^{n}\sum_{k=1}^{d_y}\left\{2\mathbb{E}_{\gamma}\left[\frac{\partial^{2}\log\mathcal{L}_{\gamma}(y_i;\theta)}{\partial^{2}y_{i_{(k)}}}+\left(\frac{\partial\log\mathcal{L}_{\gamma}(y_{i};\theta)}{\partial y_{i_{(k)}}}\right)^{2}\right]-\left(\mathbb{E}_{\gamma}\left[\frac{\partial\log\mathcal{L}_{\gamma}(y_{i};\theta)}{\partial y_{i_{(k)}}}\right]\right)^{2}\right\},
\label{eq:H}
\end{equation}
where the expectation is with respect to the robustified posterior distribution \eqref{eq:posterior}. 
Then, we can estimate $\gamma$ as follows
\begin{equation}
\hat{\gamma}:=\arg\min_{\gamma}\mathcal{H}_{n}(\gamma)
\label{eq:gam_hat}.
\end{equation}
As we shall discuss later, it can be shown that, under some conditions, $n^{-1}\mathcal{H}_{n}(\gamma)$ converges to the Fisher divergence,
$\mathcal{J}(\gamma):=\int\left\Vert \nabla_{y}\log g(y_{1:n})-\nabla_{y}\log p_{\gamma}(y_{1:n})\right\Vert ^{2}g(y_{1:n})dy_{1:n}$
Therefore, $\mathcal{H}_{n}(\gamma)$ can be considered as an empirical approximation of the Fisher divergence $\mathcal{J}(\gamma)$ for the marginal distribution based on unnormaliseable models defined by robust divergence. 
An important point here is that the estimation by the H-score is independent of the normalisation constant.
The following proposition is theoretical justification of selecting $\gamma$ via (\ref{eq:gam_hat}).

\begin{prop} \label{pro:consistency}
\emph{Let $\gamma^{\star}:=\arg\min_{\gamma}\mathcal{J}(\gamma)$. Then, under the conditions stated in \ref{sec:proofconsistency}, we have $\hat{\gamma}\rightarrow\gamma^{\star}$ w.p.1. as $n\rightarrow\infty$.}
\end{prop}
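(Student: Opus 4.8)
The plan is to treat this as a standard argmin-consistency (M-estimation) result for the criterion $\gamma\mapsto n^{-1}\mathcal{H}_n(\gamma)$, whose population analogue is the Fisher divergence $\mathcal{J}(\gamma)$. The strategy rests on two ingredients: a uniform law of large numbers giving $\sup_{\gamma\in\Gamma}|n^{-1}\mathcal{H}_n(\gamma)-\mathcal{J}(\gamma)|\to 0$ w.p.1 over a compact set $\Gamma=[\underline{\gamma},\overline{\gamma}]\subset(0,\infty)$ on which $\gamma$ is searched, and an identification (well-separated minimum) condition making $\gamma^{\star}$ isolated. Granting both, the conclusion is delivered by the classical consistency theorem for argmin/M-estimators: almost-sure uniform convergence of the objective to a limit possessing a unique, well-separated minimizer forces the sequence of minimizers $\hat{\gamma}$ to converge to $\gamma^{\star}$.

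First I would establish pointwise convergence. For fixed $\gamma$, the summands of $\mathcal{H}_n(\gamma)$ in \eqref{eq:H} involve the posterior expectation $\mathbb{E}_{\gamma}[\cdot]$ under $\Pi_{\gamma}(\theta\mid y_{-i})$. Under the posterior-concentration conditions collected in \ref{sec:proofconsistency}, $\Pi_{\gamma}(\cdot\mid y_{-i})$ concentrates at the pseudo-true value $\theta^{\star}_{\gamma}$ minimizing the population robust divergence, so each posterior expectation converges to the corresponding functional evaluated at $\theta^{\star}_{\gamma}$. The outer average $n^{-1}\sum_i$ is then controlled by the strong law for the i.i.d.\ sample $\{y_i\}$, and the integration-by-parts identity of \citet{hyvarinen2005estimation} converts the resulting limit---the expected H-score integrand at $\theta^{\star}_{\gamma}$---into $\mathcal{J}(\gamma)$ (the $\nabla_y\log g$ contribution entering only as a $\gamma$-free additive constant that leaves the argmin unchanged). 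This yields $n^{-1}\mathcal{H}_n(\gamma)\to\mathcal{J}(\gamma)$ w.p.1 for each $\gamma$.

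To upgrade to uniform convergence I would verify stochastic equicontinuity in $\gamma$: since $\log\mathcal{L}_{\gamma}(y_i;\theta)$ and its $y$-derivatives are smooth in $\gamma$ on the compact $\Gamma$, one obtains a Lipschitz-type bound $|\partial_{\gamma}(n^{-1}\mathcal{H}_n(\gamma))|\le M_n$ with $M_n$ bounded by an integrable envelope. Combined with the pointwise almost-sure limit and compactness of $\Gamma$, a standard covering argument gives $\sup_{\gamma\in\Gamma}|n^{-1}\mathcal{H}_n(\gamma)-\mathcal{J}(\gamma)|\to 0$ w.p.1. The identification step uses the assumption that $\gamma^{\star}$ is the unique minimizer of the continuous map $\mathcal{J}$; on compact $\Gamma$ this makes the minimum well separated, i.e.\ $\inf_{|\gamma-\gamma^{\star}|\ge\varepsilon}\mathcal{J}(\gamma)>\mathcal{J}(\gamma^{\star})$ for every $\varepsilon>0$.

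The hard part will be the uniform control of the posterior expectations rather than the final argmin step. The integrands in \eqref{eq:H} contain second-order $y$-derivatives and squared first derivatives of $\log\mathcal{L}_{\gamma}$, so justifying the interchange of the $n\to\infty$ posterior limit with these derivative operations, and securing a dominating envelope that is uniform in $\theta$ over the posterior's effective support and simultaneously in $\gamma$ over $\Gamma$, demands genuine moment and tail conditions---precisely the regularity assumptions imported from \citet{shao2019bayesian} and gathered in \ref{sec:proofconsistency}. Once uniform integrability and uniform-in-$\gamma$ posterior concentration are secured, the pointwise limit, the equicontinuity bound, and the argmin theorem assemble routinely.
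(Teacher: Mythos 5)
Your proposal is correct in outline and shares the paper's overall skeleton: almost-sure uniform convergence of $n^{-1}\mathcal{H}_n(\gamma)$ to $\mathcal{J}(\gamma)$ over a compact set of $\gamma$ values, identification of the limit via the integration-by-parts identity of \citet{hyvarinen2005estimation}, and a well-separated-minimum condition feeding a standard argmin-consistency theorem. Where you diverge from the paper is in the middle machinery. The paper's proof (see \ref{sec:proofconsistency}) first decomposes $\mathcal{H}_n(\gamma)$, following \citet{dawid2015bayesian} and \citet{shao2019bayesian}, into a sum of posterior-expectation terms of the individual H-scores $\mathcal{H}(y_i,p_{\gamma}(y_i\mid\theta))$ plus a sum of posterior-variance terms of $\partial\log p_{\gamma}(y_i\mid\theta)/\partial y_i$; the variance terms are shown to vanish w.p.1 (this is the paper's surrogate for your posterior-concentration step), and uniformity in $\gamma$ is then obtained in one stroke by a strong law of large numbers for random elements of the separable Banach space of continuous functions on the compact $\gamma$-set with sup-norm \citep{azlarov1982laws,beskos2009monte}. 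You instead prove pointwise a.s.\ convergence (via concentration of $\Pi_{\gamma}(\cdot\mid y_{-i})$ at the pseudo-true $\theta^{\star}_{\gamma}$) and upgrade to uniformity by stochastic equicontinuity plus a covering argument --- the classical Wald/M-estimation route. The two devices are essentially interchangeable: the Lipschitz-in-$\gamma$ envelope you must verify is exactly what makes the summands continuous function-valued random elements, which is what the Banach-space SLLN requires. A genuine merit of your write-up is that it makes explicit the replacement of the data-dependent posterior expectations by fixed functionals of $\theta^{\star}_{\gamma}$; without that replacement the summands in $\mathcal{H}_n(\gamma)$ are not independent (each depends on the whole sample through the posterior), so no form of the SLLN applies directly --- the paper delegates precisely this point to condition (s10) in the supplement of \citet{shao2019bayesian}, whereas you correctly flag it as the hard part. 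Your observation that the $\nabla_y\log g$ term enters only as a $\gamma$-free additive constant, so that minimising the limiting expected H-score and minimising $\mathcal{J}(\gamma)$ coincide, also matches the paper's implicit use of the same fact.
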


\begin{proof}
See \ref{sec:proofconsistency}.
\end{proof}

\begin{rem}
\emph{As we mentioned, the prequential version of $\mathcal{H}_{n}(\gamma)$ is also
called the H-score in the context of Bayesian model selection \citep{shao2019bayesian,dawid2015bayesian}.
The main advantage of using the H-score in this context is that it will provide a consistent and coherent model selection criterion. 
\citet{jewson2021general} proposes a consistent model selection criterion that is similarly based on H-scores but with batch estimation.
Although the prequential method is coherent, this comes with a very high computational cost, for every model, one must do posterior inference on all permutations of the data and increasing sample sizes.
Here, we use a batch estimation approach to estimate $\gamma$, which avoids high computational costs.
We also want to emphasise that, as we shall study later, such a batch approach will give rise to natural and efficient algorithms to estimate $\gamma$ and posterior sampling.
}
\end{rem}

\bigskip
Under the density power divergence (\ref{eq:Beta_likeli-1}), the first and second order derivatives of $\log\mathcal{L}_{\gamma}(y_i;\theta)$ are given by 
\begin{equation*}
\begin{split}
\frac{\partial \log\mathcal{L}_{\gamma}(y_i;\theta)}{\partial y_i}
&=f_{\theta}(y_i)^{\gamma-1}\frac{\partial f_{\theta}(y_{i})}{\partial y_i},\\
\frac{\partial^2 \log\mathcal{L}_{\gamma}(y_i;\theta)}{\partial y_i^2}
&=(\gamma-1)f_{\theta}(y_i)^{\gamma-2}\left(\frac{\partial f_{\theta}(y_{i})}{\partial y_i}\right)^2
+f_{\theta}(y_i)^{\gamma-1}\frac{\partial^2 f_{\theta}(y_{i})}{\partial y_i^2}.
\end{split}
\end{equation*}
These expressions do not include the integral term $\int f_{\theta}(x)^{1+\gamma}dx$, which makes the calculation of $\mathcal{H}_{n}(\gamma)$ much more straightforward in practice since the integral term often is a form of a complicated expression.

\subsection{Numerical illustration of the H-score under normal distribution}\label{Example:Hscore}
We consider the same problem in the example in \ref{egxamle:toyexample}. 
For a normal distribution $\mathcal{N}(\mu, \sigma^2)$, the derivatives are as follows
\begin{gather*}
\frac{\partial \log\mathcal{L}_{\gamma}(y_i;\theta)}{\partial y_i}
=
-\frac{\phi(y_i;\mu,\sigma^2)^{\gamma}(y_i-\mu)}{\sigma^2}, \ \ \ \ 
\frac{\partial^2 \log\mathcal{L}_{\gamma}(y_i;\theta)}{\partial y_i^2}
=\frac{\phi(y_i;\mu,\sigma^2)^{\gamma}}{\sigma^4}\left\{\gamma(y_i-\mu)^2 -\sigma^2\right\},
\end{gather*}
where $\phi(\cdot;\mu, \sigma^2)$ is the density function of $\mathcal{N}(\mu, \sigma^2)$.
We calculated $\mathcal{H}_n(\gamma)$ in (\ref{eq:H}) for each $\gamma$, where posterior expectations were approximated by $2000$ posterior samples of $\mu$.
The data were simulated in the same way as in \ref{egxamle:toyexample}.
The results are shown in \ref{fig:Example2} when $\tau=10$ (blue lines) and $30$ (red lines).
Our experiment shows numerically that $\mathcal{H}_n(\gamma)$ has a local minimum.
Furthermore, it can be seen that in regions where $\gamma$ is small, the posterior mean is relatively heavily influenced by outliers. In contrast, the posterior mean settles to a constant value in regions where $\gamma$ is greater than the value that minimises $\mathcal{H}_n(\gamma)$. Uncertainty in the sense of CI becomes greater.
This result would suggest that statistical inefficiencies occur in regions where $\gamma$ is larger than is necessary.

\begin{figure}[H]
\begin{centering}
\includegraphics[scale=0.7]{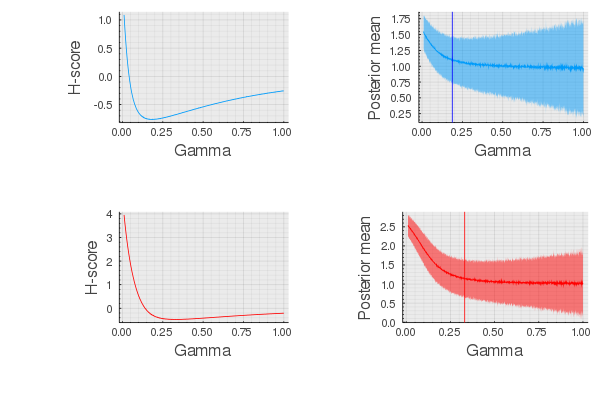}
\par\end{centering}
\caption{The top left-hand plots the values of $\mathcal{H}_{n}(\gamma)$ on the  Y-axis, and values of  $\gamma$ on the X-axis when $\tau=10$.
The minimum value of $\mathcal{H}_{n}(\gamma)$ was obtained when $\gamma=0.1874$.
The top right-hand figure plots the sample mean values of the posterior mean of $\mu$ on the Y-axis and the corresponding values of $\gamma$ on the X-axis under the same setting.
The vertical line represents the value of $\gamma$ that minimises $\mathcal{H}_{n}(\gamma)$, and the thin ribbon line represents the $95\%$ credible interval.
The bottom left-hand and right-hand figures represent the same figures when $\tau=30$ respectively.
In this case, he minimum value of $\mathcal{H}_{n}(\gamma)$ was obtained when $\gamma=0.3311.$
}
\label{fig:Example2}
\end{figure}

\section{Sequential Monte Carlo Samplers} \label{sec:SMCsampler}

A natural way to obtain $\hat{\gamma}$ in \eqref{eq:gam_hat} will
be to use Robbins-Monro-type recursion.
\begin{align}
\gamma_{t+1} & =\gamma_{t}+\kappa_{t}\nabla_{\gamma}\mathcal{H}_{t}(\gamma_{t}),\label{eq:Robbins-Monrro}
\end{align}
where $\sum_{t}\kappa_{t}=\infty,\sum_{t}\kappa_{t}^{2}<\infty$ but,
in general, posterior sampling based on the Monte Carlo approximation
will be required to evaluate $\nabla_{\gamma}\mathcal{H}_{t}(\gamma_{t})$.
That is, we need to construct an estimator of $\nabla_{\gamma}\mathcal{H}_{t}(\gamma_{t})$.
To do so, we first treat $\gamma_{t}$ in \eqref{eq:Robbins-Monrro}
as the positive sequence such that $0<\gamma_{0}<\gamma_{1}\cdots<\cdots\gamma_{T}$
where $0\leq t\leq T$ is an artificial time index. 
Then \eqref{eq:posterior} gives rise to the following tempering-like distributions on a common measurable space, say $(\Theta,\mathcal{B}(\Theta))$
\begin{align}
\text{\ensuremath{\Pi}}_{t}(\theta\mid y_{1:n}) & :=\text{\ensuremath{\Pi}}_{\gamma_{t}}(\theta\mid y_{1:n})\propto\mathcal{L}_{\gamma_{t}}(y_{1:n};\theta)\pi(\theta).\label{eq:tempred' distribuions}
\end{align}
The meaning of tempering-like here is not the usual tempering.
It means that a family of distributions is constructed in the same measurable space by a sequence of updated $\gamma_t$, gradually approaching the target distribution in the sense of an optimised $\gamma$, say $\gamma^{\star}$.
Using this, we can define a sequence of distributions defined on product
spaces $(\Theta^t,\mathcal{B}(\Theta^t):=(\Theta^{t}:=\prod_{i=1}^{t}\Theta)$
\begin{align}
\tilde{\Pi}_{t}(\theta_{0:t}\mid y_{1:n}) & :=\text{\ensuremath{\Pi}}_{t}(\theta_{t}\mid y_{1:n})\prod_{k=0}^{t-1}L_{k}(\theta_{k+1},\theta_{k}),\label{eq:istributions defined on product spaces}
\end{align}
where $L_{k}$ is a transition kernel from $\Theta^{k+1}$ to $\Theta^{k}$. 
Notice that $\tilde{\Pi}_{t}(\theta_{0:t}\mid y_{1:n})$ admits marginally $\ensuremath{\Pi}_{t}(\theta_{t}\mid y_{1:n})$.
Also let $M_{k}(\theta_{k-1},\theta_{k})$ be a $\text{\ensuremath{\Pi}}_{k}-$reversible MCMC kernel. 
Then it is given by 
$$
L_{k}(\theta_{k+1},\theta_{k})=\frac{\text{\ensuremath{\Pi}}_{k}(\theta_{k-1}\mid y_{1:n})M_{k}(\theta_{k-1},\theta_{k})}{\text{\ensuremath{\Pi}}_{k}(\theta_{k}\mid y_{1:n})}.
$$
Since $L_{t-1}\otimes \Pi_t=\Pi_t \otimes M_t$ by construction, as the Radon–Nikodym derivative between them, one can derive unnormalised incremental weights as follows
\begin{align}
\log w_{t}^{j} & :=\log\left(\frac{\mathcal{L}_{\gamma_{t}}(y_{1:n};\theta_{t-1}^{j})}{\mathcal{L}_{\gamma_{t-1}}(y_{1:n};\theta_{t-1}^{j})}\right)=\frac{-\frac{n}{1+\gamma_{t}}\int f_{\theta_{t-1}^{j}}(x)^{1+\gamma_{t}}dx+\frac{1}{\gamma_{t}}\sum_{i=1}^{n}f_{\theta_{t-1}^{j}}(y_{i})^{\gamma_{t}}}{-\frac{n}{1+\gamma_{t-1}}\int f_{\theta_{t-1}^{j}}(x)^{1+\gamma_{t-1}}dx+\frac{1}{\gamma_{t-1}}\sum_{i=1}^{n}f_{\theta_{t-1}^{j}}(y_{i})^{\gamma_{t-1}}}.\label{eq:unnormalized incremental weight}
\end{align}
For a detailed discussion of the choice of $L_k$ and $M_k$ and how the weights are derived, see e.g. \citet{del2006sequential,dai2020invitation}.
Then the SMC samplers \citep{del2006sequential} iterate the following steps. 
First, the normalised weights $W_{t}^{j}:=(\sum_{k=1}^{N}w_{t}^{k})^{-1}w_{t}^{j}$ are calculated for $j\in[1,N]$. 
Using these, one needs to sample ancestor indices $\left\{ A_{t}^{j}\right\} _{j=1}^{N}$ from the categorical distribution induced by the normalised weights $\{W_{t}^{j}\}_{j=1}^{N}$, denoted by $\mathrm{C}\left(\{W_{t}^{j}\}_{j=1}^{N}\right)$.
Finally, sample $\theta_{t}^{j}$ through $M_{t}(\theta_{t-1}^{A_{t}^{j}},d\theta_{t})$
for $j\in[1,N]$. We refer to \citet{dai2020invitation} for a number
of recent advances in SMC samplers. As a result, we have the particle system $\{\theta_{t}^{j},W_{t}^{j}\}_{j=1}^{N}$
that constructs an approximation $\sum_{j=1}^{N}W_{t}^{j}\delta_{\theta_{t}^{j}}(d\theta)$
of $\text{\ensuremath{\Pi}}_{t}(\theta\mid y_{1:n})$ at any time step $t$ as required,
where $\delta_{z}(dx)$ denotes the Dirac measure located at $z$. Using the system, an approximation $\nabla_{\gamma}\hat{\mathcal{H}}_{t}(\gamma_{t})$ of $\nabla_{\gamma}\mathcal{H}_{t}(\gamma)$ can be obtained under appropriate regularity conditions that allow us to interchange differentiation with respect to $\gamma$ and integration; see \ref{sec:derivative_Hscore} for details. Therefore, an approximation of \eqref{eq:Robbins-Monrro} will be
\begin{equation}
\gamma_{t+1}=\gamma_{t}+\kappa_{t}\nabla_{\gamma}\hat{\mathcal{H}}_{t}(\gamma_{t}).\label{eq:updatebeta}
\end{equation}
Given the time steps $T>0$ and initial $\gamma_{0}>0$, we update
$\gamma_{t}$ via \eqref{eq:updatebeta} and iterate the SMC sampler
$T$ times. Our method can be algorithmically summarised as follows.

\begin{algorithm}[H]
\caption{\label{alg:SMC-sampler-.}}

\begin{enumerate}
\item Initialise the particles $\{\theta_{0}^{j},W_{0}^{j}\}_{j=1}^{N}$,
with $\theta_{0}^{j}\overset{i.i.d.}{\sim}\pi$, $\gamma_{0}>0$,
$W_{0}^{i}=1$ for $j\in[1,N]$.
\item Apply the iteration $\gamma_{t+1}=\gamma_{t}+\kappa_{t}\nabla_{\gamma}\hat{\mathcal{H}}_{t}(\gamma_{t}).$
\item Calculate $w_{t}^{j}$ in \eqref{eq:unnormalized incremental weight}
and set $W_{t}^{j}=\frac{w_{t}^{j}}{\sum_{k=1}^{N}w_{t}^{k}}$ for
$j\in[1,N]$.
\item Sample ancestor indices $\left\{ A_{t}^{j}\right\} _{j=1}^{N}\sim\mathrm{C}\left(\{W_{t}^{j}\}_{j=1}^{N}\right)$.
\item Sample particles $\theta_{t}^{j}\sim M_{t}(\theta_{t-1}^{A_{t}^{j}},d\theta_{t})$
for $j\in[1,N]$.
\item Obtain estimate of $\nabla_{\gamma}\mathcal{H}_{t+1}(\gamma_{t+1})$.
\end{enumerate}
\end{algorithm}

\begin{rem}
\emph{Since $\{w_{t}^{j}\}$ are independent of $\{\theta_{t}^{i}\}$
but dependent of $\{\theta_{t-1}^{i}\}$, the particles $\{\theta_{t}^{i}\}$
can be sampled after resampling in \ref{alg:SMC-sampler-.}. In addition,
\ref{alg:SMC-sampler-.} uses a simple multinomial resampling applied
at each step. The variability of the Monte Carlo estimates can be
further reduced by incorporating dynamic resampling via the use of effective
sample size. See \citet{del2006sequential,dai2020invitation} for details.
}
\end{rem}

Theoretical guarantees of convergence of the Robbins-Monro algorithm usually require that $\nabla_{\gamma}\hat{\mathcal{H}}_{t}(\gamma_{t})$ is unbiased.
Even if $0<\gamma_{0}<\gamma_{1}\cdots<\cdots\gamma_{T}$ is chosen adaptively,  \cite{beskos2016convergence} shows that $\nabla_{\gamma}\hat{\mathcal{H}}_{t}(\gamma_{t})$ is still a (weakly) consistent estimator, but not an unbiased estimator.
Such unbiased estimation may be possible by using the recently developed MCMC with couplings in \ref{alg:SMC-sampler-.}, see \cite{middleton2019unbiased,jacob2020unbiased} for details. 
Instead of discussing convergence through such unbiased estimation, we shall discuss convergence through numerical experiments in the following sections.

We end this section by noting several advantages of the proposed method. First, \ref{alg:SMC-sampler-.} enables us to estimate the tuning parameter and obtain posterior sampling simultaneously. 
This is a notable difference from existing methods, such as running MCMC or the EM algorithm with a fixed tuning parameter, for example, \citet{fujisawa2008robust,ghosh2016robust}. 
We believe that it may be emphasised that by setting up a well-defined target function and using the stochastic gradient framework-based SMC samplers, it is possible to avoid the two-stage estimation that many previous studies have done in this context.
We also emphasise that our proposed method has two notable advantages over existing methods: it does not require pilot plots, and it does not require an expression of the asymptotic variance of the model. 
Next, recall that as $\gamma\downarrow0$, $\mathcal{L}_{\gamma}(y_{1:n};\theta)$ converges to Kullback–Leibler divergence. 
Let $\gamma^{\star}$ be the value of converged $\gamma$ in \ref{alg:SMC-sampler-.}.
Then \ref{alg:SMC-sampler-.} may be producing an approximated bridge between (multiplied by a prior distribution) Kullback–Leibler divergence and the target distribution $\mathcal{L}_{\gamma^{\star}}(y_{1:n};\theta)$.
Therefore, sampling from such tempering-like distributions induced by the density power divergence \eqref{eq:tempred' distribuions} could provide a beneficial tempering effect and a potential reduction in computational complexity, particularly when $d$ is large \citep{neal2001annealed}. 
Finally, \ref{alg:SMC-sampler-.} will give rise to a natural way to construct adaptive MCMC kernels. Suppose that we use Metropolis-Hastings kernels based on a Gaussian random walk. Notice that before MCMC step in \ref{alg:SMC-sampler-.}, we have $\{\theta_{t-1}^{j},W_{t-1}^{j}\}_{j=1}^{N}$ which approximates
$\text{\ensuremath{\Pi}}_{t-1}(\theta\mid y_{1:n})$ so that estimates $\hat{\mu}_{t-1}:=\sum_{j=1}^{N}W_{t-1}^{j}\theta_{t-1}^{j}$, $\hat{\Sigma}_{t-1}:=\sum_{j=1}^{N}W_{t-1}^{j}(\theta_{t-1}^{j}-\hat{\mu}_{t-1})(\theta_{t-1}^{j}-\hat{\mu}_{t-1})^{\top}$ are available. Also \citet[Theorem 1]{ghosh2016robust} shows that
$\text{\ensuremath{\Pi}}_{t-1}(\theta\mid y_{1:n})$ can be approximated by the Gaussian distribution. These will lead us to set $M_{t}(\theta_{t-1}^{A_{t}^{j}},d\theta_{t})=\theta_{t-1}^{A_{t}^{j}}+\xi_{t}$,
$\xi_{t}\overset{i.i.d.}{\sim}\mathcal{N}(0,2.38d^{-1/2}\hat{\Sigma}_{t-1})$, for instance. We note that this proposal is also can be considered as a consequence of the results from the optimal scaling analysis for random walk Metoropolis, see \citet[Chapter 17]{chopin2002sequential,chopin2020introduction}
and references therein for more details.

\section{Numerical Examples}\label{sec:Numerical-examples}
To specify the schedule for the scaling parameters $\{\kappa_{t}\}$
in \eqref{eq:updatebeta}, we use the standard adaptive method termed \texttt{ADAM}, by \cite{kingma2014adam}, known as stabilising the unnecessary numerical instability
due to the choice of $\{\kappa_{t}\}$. Assume that after $t$ steps we have $c_{t}:=\nabla\hat{\mathcal{H}}_{t}(\gamma_t)$.
\texttt{ADAM} applies the following iterative procedure,
\begin{gather*}
m_{t}=m_{t-1}\beta_{1}+(1-\beta_{1})c_{t}, \quad  v_{t}=v_{t-1}\beta_{2}+(1-\beta_{2})c_{t}^{2},\\
\hat{m}_{t}=m_{t}/(1-\beta_{1}^{t}),\quad \hat{v}_{t}=v_{t}/(1-\beta_{2}^{t}),\\
\gamma_{t+1}=\gamma_{t}-\alpha\hat{m}_{t}/(\sqrt{\hat{v}_{t}}+\epsilon),
\end{gather*}
where $(\beta_{1},\beta_{2},\alpha,\epsilon)$ are the tuning parameters.
The convergence properties of \texttt{ADAM} have been
widely studied \citep{kingma2014adam,reddi2019convergence}. Following closely \citet{kingma2014adam}, in all uses of \texttt{ADAM} below we set $(\beta_{1},\beta_{2},\alpha,\epsilon)=(0.9,0.999,0.003,10^{-8})$.
\texttt{ADAM} is nowadays a standard and very effective addition to the type of recursive inference algorithms we are considering here, even more so as for increasing dimension of unknown parameters. See the above references for more motivation and details.

\subsection{Simulation Studies}

We here demonstrate the numerical performance of the proposed method. 
Throughout this study, we consider Gaussian models $\{f_{\theta}:\theta\in\Theta\}=\text{\ensuremath{\mathcal{N}(\mu,\sigma^{2})}}$ where both $\mu$ and $\sigma^2$ are unknown parameters.
We first simulated data $\{y_{i}\}_{i=1}^{100}\overset{i.i.d.}{\sim}\mathcal{N}(1,1)$ and then randomly replaced $\tau\%$ of $\{y_{i}\}_{i=1}^{100}$ by $y_{i}+5$. 
This setting is commonly referred to as the M-open world \citep{bernardo2009bayesian} in the sense that there is no $\theta^{\star}$ such that $g=f_{\theta^{\star}}$.

\subsection*{Experiment 1: Convergence property}
We investigate convergence behaviour of \ref{alg:SMC-sampler-.}.
In this study, we set $(N,T,\gamma_{0})=(2,000,500,0.1)$ with $50$ MCMC steps in \ref{alg:SMC-sampler-.} to estimate $\gamma$.
The results are shown in \ref{fig:Example3}. As can be seen in the figure, our proposed method converges stably to the true value after about 100 iterations.
Here, the true value was obtained by first approximating the H-score with MCMC in the same way as before and then using a grid search to find the $\gamma$ that minimises it.

\begin{figure}[H]
\begin{centering}
\includegraphics[scale=0.4]{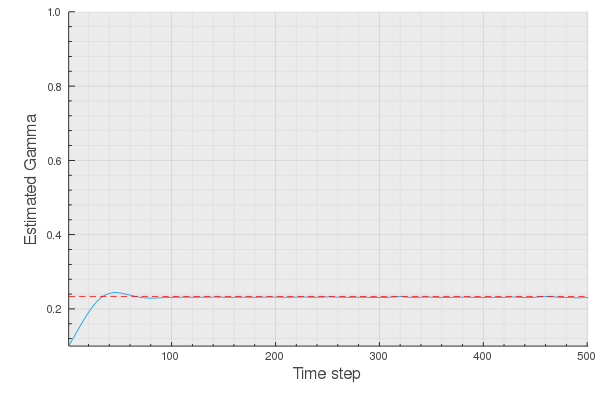}\includegraphics[scale=0.4]{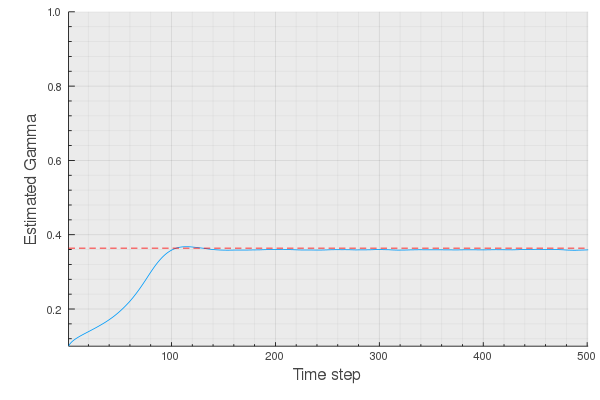}
\par\end{centering}
\caption{\label{fig:example4.1}Gaussian models experiment: Trajectories from
execution of \ref{alg:SMC-sampler-.}. We used $N=2,000$ particles
with $50$ MCMC iterations and initial value $\gamma_{0}=0.1$. The
left panel shows results when $\tau=5$ and the right one shows when
$\tau=10$. The horizontal dashed lines in the plots show the true
parameter $\gamma^{\star}=0.2339$ for the left and $\gamma^{\star}=0.3638$
for the right. The blue lines show the trajectory of $\hat{\gamma}$ estimated by \ref{alg:SMC-sampler-.}.}
\label{fig:Example3}
\end{figure}

\subsection*{Experiment 2: Comparison with methods using fixed values of $\gamma$}
We next compare the performance of our proposed method with that of a non-adaptive method using a fixed value of $\gamma$.
We set $\tau\in \{0, 10,20, 30\}$, and for each case we computed the posterior distribution of $\mu$ using \ref{alg:SMC-sampler-.} and the vanilla version of MCMC.
For \ref{alg:SMC-sampler-.}, the tuning parameters were set to $(N,T,\gamma_{0})=(2,000,300,0.1)$ with $50$ MCMC steps, and for vanilla MCMC, $\gamma$ was set to $\gamma\in \{0.1,0.3,0.5,0.7,0.9\}$ in advance of the estimation with 100,000 MCMC steps.
We used Metropolis-Hastings kernels based on a Gaussian random walk with $\mathcal{N}(0,0.4)$ for the two cases under the uniform prior.
Using the posterior samples obtained from \ref{alg:SMC-sampler-.} and non-adaptive methods, we computed the posterior mean and $95\%$ credible interval of $\mu$. 
We ran 100 Monte Carlo experiments to calculate their (empirical) mean square error (MSE) and average 95$\%$ credible interval (ACI). 
The MSE is computed against the target value of 1, and the value is multiplied by 100. 
The results are given in \ref{table:nonadaptive}.
Although it is a simple example, the results summarised in the table clearly show that the accuracy of the inference is improved by estimating $\gamma$ from the data rather than simply fixing it in terms of MSE.
In fact, the best $\gamma$ among the five choices depends on the underlying contamination ratio that we do not know in practice. 
Hence, it is difficult to determine a suitable value of $\gamma$ simply by looking at the data, while our method can automatically tune the value of $\gamma$ from the data.
It should also be noted that the importance of adaptive tuning of $\gamma$ is reflected in the results of not only MSE but also ACI; that is, the interval length obtained from \ref{alg:SMC-sampler-.} is narrow compared with the non-adaptive methods in all the four scenarios.

\begin{table}[H]
\begin{centering}
\begin{tabular}{|ccccc|}
\hline 
 & $\tau=0$ & $\tau=10$ & $\tau=20$ & $\tau=30$\tabularnewline
\hline 
$(\gamma=0.1)$ &  &  &  & \tabularnewline
MSE & {\bf 3.66} & 12.09 & 7.58 & 21.54\tabularnewline
ACI & (0.62, 1.06) & (1.08, 1.57) & (1.00, 1.49) & (1.18, 1.71)\tabularnewline
\hline 
$(\gamma=0.3)$ &  &  &  & \tabularnewline
MSE & 4.87 & {\bf 4.08} & {\bf 2.27} & {\bf 2.71}\tabularnewline
ACI & (0.56, 1.10) & (0.82, 1.44) & (0.69, 1.28) & (0.65, 1.27)\tabularnewline
\hline 
($\gamma=0.5$) &  &  &  & \tabularnewline
MSE & 6.27 & 4.79 & 3.32 & \textcolor{black}{4.86}\tabularnewline
ACI & (0.49, 1.13) & (0.69, 1.45) & (0.61, 1.31) & (0.52, 1.24)\tabularnewline
\hline 
($\gamma=0.7$) &  &  &  & \tabularnewline
MSE & 8.00 & 5.78 & 4.61 & 6.78\tabularnewline
ACI & (0.41, 1.18) & (0.58, 1.51) & (0.55, 1.38) & (0.43, 1.29)\tabularnewline
\hline 
($\gamma=0.9$) &  &  &  & \tabularnewline
MSE & 10.04 & 8.35 & 6.44 & 8.90\tabularnewline
ACI & (0.33, 1.24) & (0.46, 1.59) & (0.47, 1.46) & (0.34, 1.35)\tabularnewline
\hline 
\hline
$\hat{\gamma}$ & 0.006 & 0.207 & 0.213 & 0.272\tabularnewline
MSE & 3.13 & 3.51 & 2.13 & 2.47\tabularnewline
ACI & (0.66, 1.05) & (0.91, 1.46) & (0.76, 1.32) & (0.67, 1.28)\tabularnewline
\hline 
\end{tabular}
\par\end{centering}
\centering{}
\caption{Empirical mean squared errors (MSE) and average 95$\%$ credible intervals (ACI) of \ref{alg:SMC-sampler-.} and the non-adaptive method (the vanilla version of MCMC with fixed $\gamma$), based on 100 Monte Carlo experiments. 
The best MSE value among different choices of $\gamma$ is highlighted in bold. 
The bottom row shows estimated $\gamma$ and the corresponding MSE and CI when estimated with our proposed method. The tuning parameters were set to $(N,T,\gamma_{0})=(2,000,300,0.1)$ with $50$ MCMC steps
}
\label{table:nonadaptive}
\end{table}

\subsection*{Experiment 3: Comparison with \cite{jewson2021general}}
We next compare the proposed method with the H-posterior proposed by \cite{jewson2021general} (denoted by JR hereafter), where the posterior of the model parameters $(\mu,\sigma^2)$ as well as $\gamma$ can be obtained.
To apply the JR method, we generated 1000 posterior samples after discarding the first 500 samples. 
We evaluate the performance of the inference of $\mu$ and $\sigma$ by MSE (multiplied by 100), coverage probability (CP) and average length (AL) of $95\%$ credible intervals.
The results are shown in Table~\ref{table:JR}, where the average estimates of $\gamma$ are also shown. 
Although both methods provide similar estimates of $\gamma$, the accuracy of point estimation of JR is slightly better than that of \ref{alg:SMC-sampler-.}.
However, it is observed that JR tends to produce a short coverage length so that the CP of the JR method is much smaller than the nominal level $95\%$. 
Accordingly, the average length is much smaller than those by \ref{alg:SMC-sampler-.}.
This means that a direct application of the H-posterior by \cite{jewson2021general} may fail to capture the uncertainty of the posterior compared with the proposed method.

\begin{table}[H]
\begin{centering}
\begin{tabular}{|ccccccccccc|}
\hline 
&&&&& \multicolumn{2}{c}{MSE} & \multicolumn{2}{c}{CP} & \multicolumn{2}{c}{AL} \\
 & $\tau$ &  & mean$(\hat{\gamma})$ & var$(\hat{\gamma})$ & $\mu$ & $\sigma$ & $\mu$ & $\sigma$ & $\mu$ & $\sigma$ \\
 \hline
 & 5 &  & 0.194 & 0.001 & 3.62 & 5.23 & 97 & 84 & 0.57 & 0.53 \\
\ref{alg:SMC-sampler-.} & 10 &  & 0.297 & 0.003 & 4.57 & 9.20 & 97 & 88 & 0.51 & 0.56 \\
 & 15 &  & 0.377 & 0.009 & 1.06 & 12.60 & 92 & 82 & 0.61 & 0.67 \\
 \hline
 & 5 &  & 0.193 & 0.011 & 2.33 & 5.10 & 68 & 13 & 0.27 & 0.16 \\
JR & 10 &  & 0.230 & 0.009 & 2.46 & 6.25 & 64 & 10 & 0.28 & 0.16 \\
 & 15 &  & 0.261 & 0.013 & 2.50 & 7.27 & 68 & 5 & 0.29 & 0.15 \\
\hline 
\end{tabular}
\par\end{centering}
\centering{}
\caption{Mean squared errors (MSE), coverage probability (CP) and average length (AL) of $95\%$ credible intervals of $\mu$ and $\sigma$ based on \ref{alg:SMC-sampler-.} and the JR method. MSE is multiplied by 100.
The tuning parameters in our algorithm were set to $(N,T,\gamma_{0})=(2,000,300,0.1)$ with $5$ MCMC steps.
}
\label{table:JR}
\end{table}

\subsection*{Experiment 4: Comparison with tempering}
Following \cite{nakagawa2020robust}, we compare the robustness to outliers for the two generalised posterior distributions. 
The first distribution is constructed in the same way as before, while the other is constructed using tempering.
We specify a tempered posterior as $\ensuremath{\Pi}_{\phi_{t}}(\theta\mid y_{1:n})\propto\mathcal{L}(y_{1:n};\theta)^{\phi_t}\pi(\theta)$ where $0=\phi_0<\phi_{1}\cdots<\cdots\phi_{T}=1$. 
To construct the sequence, we divided the interval $[0,1]$ into 500 equal parts.  
We applied \ref{alg:SMC-sampler-.} and the SMC sampler with the tempered posterior to test the robustness of the proposed method to data sets containing outliers.
We used $N=2,000$ particles, $50$ MCMC steps for both methods, and set $(T,\gamma_0)=(500,0.1)$ for \ref{alg:SMC-sampler-.}.
The prior and MCMC kernels were set as the previous experiment, and the density estimation results obtained from the estimation results are summarised in \ref{fig:posresult}. 
The red line represents the posterior density estimate for $\mu$ when the data do not contain any outliers, and the blue line represents it when the data contain outliers.
It is clear from the estimation results that our proposed method is robust even when the dataset contains outliers, while the SMC sampler with the tempered posterior is greatly affected by outliers, and the estimated posterior distributions are completely separated as a result.

\begin{figure}[H]
\begin{centering}
\includegraphics[scale=0.6]{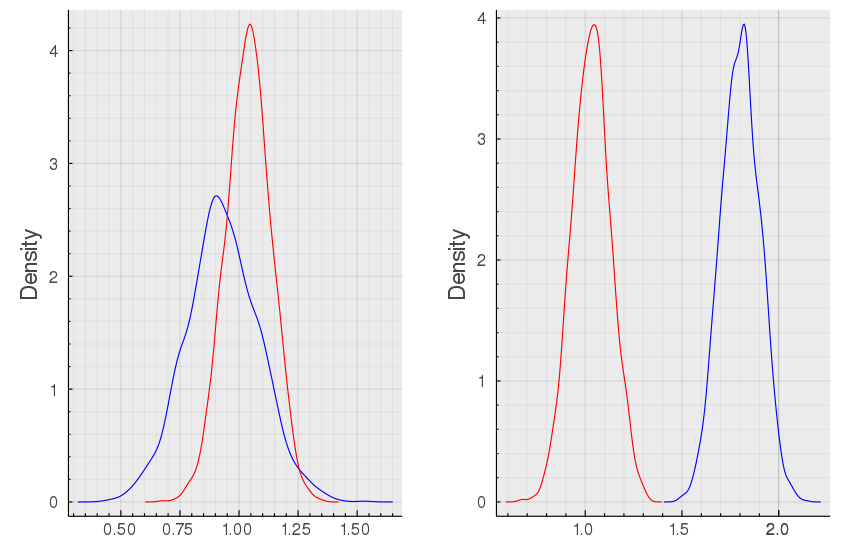}
\par\end{centering}
\centering{}\caption{The density estimation of $\mu$ estimated by \ref{alg:SMC-sampler-.} (left) and SMC sampler with the tempered posterior (right). The blue line shows when $\tau=20$ (contains $20\%$ outliers) and the red one shows when $\tau=0$  (contains $0\%$ outliers) in both panels.}
\label{fig:posresult}
\end{figure}

\subsection{Applications to Real Data}
\subsection*{Newcomb data}
We apply our methodology to Simon Newcomb's measurements of the speed of light data, motivated by applications in \citet{stigler1977robust,basu1998robust,basak2021optimal}.
The data can be obtained from Andrew Gelman's webpage: \url{http://www.stat.columbia.edu/~gelman/book/data/light.asc}.
The sample size of the data set is 66 and contains two outliers,
-44 and -2, illustrated in \ref{fig:data}. We fitted a Gaussian distribution
model $\{f_{\theta}:\theta\in\Theta\}=\text{\ensuremath{\mathcal{N}(\mu,\sigma^{2})}}$
to the data and used \ref{alg:SMC-sampler-.} to obtain the posterior
distribution of the parameters $(\mu,\sigma)$. The tuning parameters $(N,T,\gamma_{0})$ in \ref{alg:SMC-sampler-.}
were set to $(2,000,300,0.1)$ with $50$ MCMC iterations.
The MCMC kernel was constructed as in the previous examples, and results are given in \ref{fig:Estimated-pos}.
The existing study \citep{basak2021optimal} reported $\hat{\gamma}=0.23$ for the same data set, which is very high compared to our estimate result of $\hat{\gamma}=0.0855$.
Since the method proposed in \cite{basak2021optimal} requires a pilot plot and the estimation results depend significantly on it, we believe our estimation results are more reasonable. In fact, it is unlikely that we will have to use a value of ${\gamma}=0.23$ for a data set that contains only two outliers. As shown in \cite{basu1998robust}, the parameter estimates are almost the same when $\gamma=0.0855$ and when $\gamma=0.23$. However, from the point of view of statistical efficiency, it would be preferable to adopt the lower value of $\gamma=0.0855$ if the estimates were the same.
To confirm this, $100$ bootstrap resamplings were performed on the data, and the posterior bootstrap mean of each parameter and the variance was calculated, reported in \ref{table:newcom}.
For each re-sampled data, we compared the results when the posterior distribution was calculated while estimating $\gamma$ with our method and when the posterior distribution was calculated using MCMC after estimating and fixing it with the method proposed in \citet{basak2021optimal}.
The numerical experiments show that although the means estimated parameters agree between the two methods, the variances are much smaller for our method, suggesting that overestimation of $\gamma$ leads to statistical inefficiency.

\begin{figure}[H]
\begin{centering}
\includegraphics[scale=0.6]{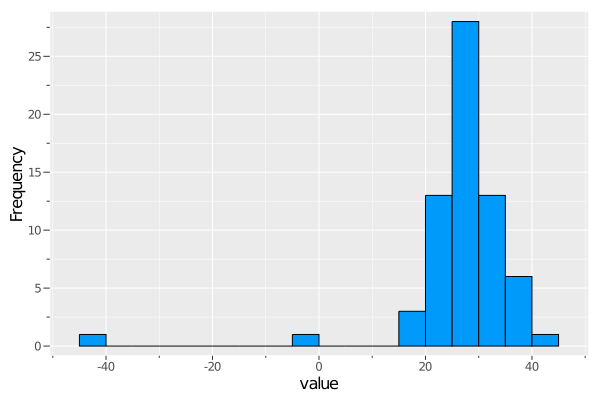}
\par\end{centering}
\centering{}\caption{The histogram of Simon Newcomb's measurements of the speed of light
data.}
\label{fig:data}
\end{figure}

\begin{figure}[H]
\begin{centering}
\includegraphics[scale=0.6]{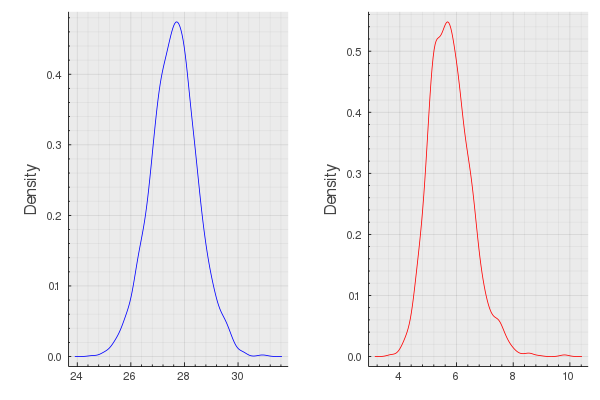}
\caption{The density estimation of $\mu$ (left) and
$\sigma$ (right) for Simon Newcomb's measurements of the speed of light
data. The tuning parameters $(N,T,\gamma_{0})$ were set to $(2,000,300,0.1)$ 
with $50$ MCMC iterations. The mean value of the estimated $\mu$ is 27.6082 and $\sigma$ is 5.7829 with $\hat{\gamma}=0.0855$.}
\label{fig:Estimated-pos}
\par\end{centering}
\end{figure}

\begin{table}[H]
\begin{centering}
\begin{tabular}{|ccccc|}
\hline 
& mean($\hat{\mu}$) & var($\hat{\mu}$) & mean($\hat{\sigma}$) & var($\hat{\sigma}$)\tabularnewline
\hline 
\ref{alg:SMC-sampler-.} & 27.559 & 0.0017 & 5.7605 & 0.0013\tabularnewline
\hline 
\citet{basak2021optimal}  & 27.674 & 0.4351 & 5.3976 & 0.5295\tabularnewline
\hline
\end{tabular}
\par\end{centering}
\centering{}
\caption{Mean and variance of the posterior means of the parameters from $100$ bootstrap re-samplings.
The first row shows the result when using the proposed method, and the second one shows when using the method studied in \citet{basak2021optimal}
}
\label{table:newcom}
\end{table}

\subsection*{Hertzsprung–Russell Star Cluster Data}
We use our methodology to perform linear regression models with normal errors, that is $y_i \sim \mathcal{N}(x^{\top}_{i}\beta,\sigma^2)$. Motivated by \cite{basak2021optimal}, we fitted the regression model to the Hertzsprung-Russell star cluster data \citep{rousseeuw2005robust}, without constants. 
The data set contains 47 observations on the logarithm of the effective temperature at the surface of the CYG OB1 star cluster (Te, covariates $\{x_i\}$) and the logarithm of its light intensity (L/L0, explained variables $\{y_i\}$). 
The data can be obtained from \url{https://rdrr.io/cran/robustbase/man/starsCYG.html}, and shown at \ref{fig:Hertzsprung}. 
The tuning parameters $(N,T,\gamma_{0})$ in \ref{alg:SMC-sampler-.} were set to $(2,000,300,0.1)$ with $50$ MCMC iterations, and we used the uniform prior for $(\beta,\sigma)$. 
The MCMC kernel was constructed as in the previous examples, and results are given in \ref{fig:Estimated-pos-reg}. 
The corresponding OLS estimates were $(\hat{\beta},\hat{\sigma})=(1.1559,0.7219)$. Whilst we obtained $\hat{\gamma }=0.1165$, \cite{basak2021optimal} reported $\hat{\gamma}=0.76$ for the same data set.
Our numerical experiments and previous studies \citep{ghosh2016robust,nakagawa2020robust} will suggest that as the proportion of outliers in the data increases, the value of $\gamma$ also tends to increase.
Thus, such a large value of $\gamma$ is not reasonable considering the proportion of outliers in the data (only four samples in the lower right part in \ref{fig:Hertzsprung}), suggesting the superiority of our proposed method.
Indeed, to confirm the suggested statistical inefficiency, the same experiments as in the previous section were carried out, and the results are summarised in \ref{table:star}.
Although the results are not as striking as in the previous Newcomb data example, it would be possible to confirm that statistical inefficiencies occur in the estimation by the proposed method in \cite{basak2021optimal} in the regression model as well.

\begin{figure}[H]
\begin{centering}
\includegraphics[scale=0.6]{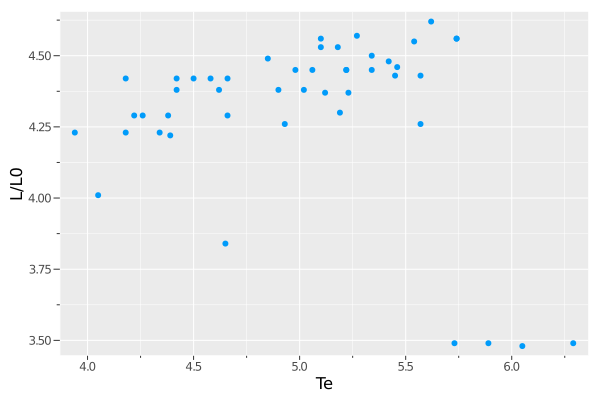}
\centering{}\caption{The scatter plot of Hertzsprung–Russell star cluster data.}
\label{fig:Hertzsprung}
\par\end{centering}
\end{figure}

\begin{figure}[H]
\begin{centering}
\includegraphics[scale=0.6]{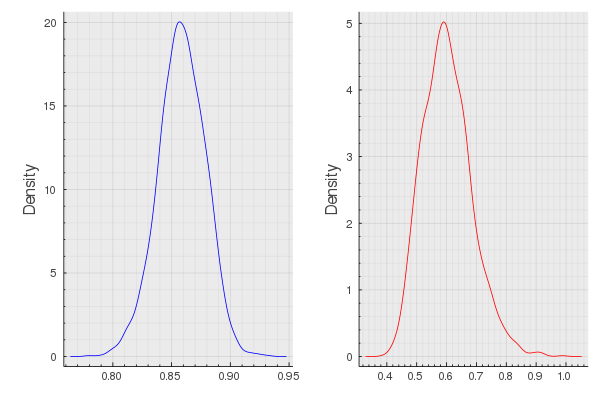}
\caption{The density estimation of $\beta$ (left) and
$\sigma$ (right) for Hertzsprung–Russell star cluster data. The tuning parameters $(N,T,\gamma_{0})$ were set to $(2,000,300,0.1)$ 
with $50$ MCMC iterations. The mean value of the estimated $\beta$ is 0.8586 and $\sigma$ is 0.602 with $\hat{\gamma}=0.1165$. The corresponding OLS estimates are $(\hat{\beta},\hat{\sigma})=(1.1559,0.7219)$}
\label{fig:Estimated-pos-reg}
\par\end{centering}
\end{figure}

\begin{table}[H]
\begin{centering}
\begin{tabular}{|ccccc|}
\hline 
& mean($\hat{\mu}$) & var($\hat{\mu}$) & mean($\hat{\sigma}$) & var($\hat{\sigma}$)\tabularnewline
\hline 
\ref{alg:SMC-sampler-.} & 0.8514 & 0.0002 & 0.6100 & 0.0037\tabularnewline
\hline 
\citet{basak2021optimal}  &0.8763 & 0.0299 & 0.6536 & 0.0708\tabularnewline
\hline
\end{tabular}
\par\end{centering}
\centering{}
\caption{Mean and variance of the posterior means of the parameters from $100$ bootstrap re-samplings.
The first row shows the result when using the proposed method, and the second one shows when using the method studied in \citet{basak2021optimal}
}
\label{table:star}
\end{table}

\section{Concluding remarks} \label{sec:coclusion}
Our proposed method performs reasonably well and provides one of the few options, as far as we know, for routine robust Bayesian inference. To the best of our knowledge, this is the first attempt to propose both a theory and a computational algorithm to estimate the tuning parameters from data and to allow robust Bayesian estimation. We have shown numerically that a more efficient Bayesian estimation can be achieved by estimating the tuning parameter $\gamma$ from the data. We have proposed an efficient sampling method using SMC samplers considering the sequence of $\gamma$ as the temperature. Compared to existing studies \citep{warwick2005choosing,basak2021optimal}, our method has specificity and usefulness in that we can estimate the tuning parameters and sample from the posterior distribution simultaneously, and pilot plots and the asymptotic variance formula are not necessary. In this paper, we have focused in particular on the case of the density power divergence, but we want to stress that our method is general enough in the sense that it can be applied to the Bayesian estimation of other robust divergence-induced models.

Furthermore, our framework opens up a number of routes for future research and insight, including those described below.
\begin{enumerate}
    \item As we have noted, the integral term $\int f_{\theta}(x)^{1+\gamma}dx$ is eliminated in the H-score, while the computation of the posterior distribution is computationally expensive, so it may be better to consider the H-posterior studied in \cite{jewson2021general} in this respect. However, the robustness of the H-posterior has not yet been studied, and it would therefore be interesting in the future to investigate this point in more detail using the influence function.
    \item Another direction of investigation involves the construction of an efficient MCMC kernel for posterior distributions derived from robust divergence such as (\ref{eq:posterior}). To make good inferences from data containing outliers, the posterior distribution induced by robust divergence is a model with a more or less heavy-tailed.
    As studied in \cite{kamatani2018efficient}, many standard MCMC algorithms are known to perform poorly in such cases, especially in higher dimensions. Therefore, studying MCMC algorithms within \ref{alg:SMC-sampler-.} tailored to the posterior distribution induced by robust divergence would allow for more efficient Bayesian robust estimation.
    \item In this study, we have not focused on time series data, but, as \cite{shao2019bayesian} shows, the H-score can also be defined for models that are not independent, for example, state-space models. In fact, \cite{boustati2020generalized} proposes a method for Bayesian filtering of state-space models using robust divergence, but the tuning parameters need to be estimated before filtering, and in this sense, it is not online filtering. \ref{alg:SMC-sampler-.} does batch estimation, but we believe that extending it to online estimation would allow robust filtering of the state-space model while estimating the tuning parameters online from the data.
    \item This study has focused on estimation and computational methods proposed in the generalised Bayesian framework, particularly using robust divergence. 
    On the other hand, methods using the Maximum Mean Discrepancy \citep{cherief2020mmd} and Kernel Stein Discrepancy \citep{matsubara2021robust} have also been proposed in recent years in the same generalised Bayesian framework, although not with the motivation of dealing with outliers.
    Both require adjustment of the hyperparameters of the kernel used, and it may be possible to estimate them using our proposed method and compare their performance.
    We have avoided comparing these potential alternative approaches because we believe this would obscure the main messages we have tried to convey within the numerical results section. Such a detailed numerical study can be the subject of future work.
\end{enumerate}

\section*{Acknowledgement}
SY was supported by the Japan Society for the Promotion of Science (KAKENHI) under grant number 21K17713. SS was supported by the Japan Society for the Promotion of Science (KAKENHI) under grant number 21H00699.

\bibliographystyle{apalike}
\bibliography{reference}

\begin{thebibliography}{}

\bibitem[Azlarov and Volodin, 1982]{azlarov1982laws}
Azlarov, T.~A. and Volodin, N.~A. (1982).
\newblock Laws of large numbers for identically distributed banach-space valued
  random variables.
\newblock {\em Theory of Probability \& Its Applications}, 26(3):573--580.

\bibitem[Basak et~al., 2021]{basak2021optimal}
Basak, S., Basu, A., and Jones, M. (2021).
\newblock On the optimal density power divergence tuning parameter.
\newblock {\em Journal of Applied Statistics}, 48(3):536--556.

\bibitem[Basu et~al., 1998]{basu1998robust}
Basu, A., Harris, I.~R., Hjort, N.~L., and Jones, M. (1998).
\newblock Robust and efficient estimation by minimising a density power
  divergence.
\newblock {\em Biometrika}, 85(3):549--559.

\bibitem[Bernardo and Smith, 2009]{bernardo2009bayesian}
Bernardo, J.~M. and Smith, A.~F. (2009).
\newblock {\em Bayesian theory}, volume 405.
\newblock John Wiley \& Sons.

\bibitem[Beskos et~al., 2016]{beskos2016convergence}
Beskos, A., Jasra, A., Kantas, N., and Thiery, A. (2016).
\newblock On the convergence of adaptive sequential monte carlo methods.
\newblock {\em The Annals of Applied Probability}, 26(2):1111--1146.

\bibitem[Beskos et~al., 2009]{beskos2009monte}
Beskos, A., Papaspiliopoulos, O., and Roberts, G. (2009).
\newblock Monte carlo maximum likelihood estimation for discretely observed
  diffusion processes.
\newblock {\em The Annals of Statistics}, 37(1):223--245.

\bibitem[Bissiri et~al., 2016]{bissiri2016general}
Bissiri, P.~G., Holmes, C.~C., and Walker, S.~G. (2016).
\newblock A general framework for updating belief distributions.
\newblock {\em Journal of the Royal Statistical Society. Series B, Statistical
  methodology}, 78(5):1103.

\bibitem[Boustati et~al., 2020]{boustati2020generalized}
Boustati, A., Akyildiz, {\"O}.~D., Damoulas, T., and Johansen, A. (2020).
\newblock Generalized bayesian filtering via sequential monte carlo.
\newblock {\em arXiv preprint arXiv:2002.09998}.

\bibitem[Ch{\'e}rief-Abdellatif and Alquier, 2020]{cherief2020mmd}
Ch{\'e}rief-Abdellatif, B.-E. and Alquier, P. (2020).
\newblock Mmd-bayes: Robust bayesian estimation via maximum mean discrepancy.
\newblock In {\em Symposium on Advances in Approximate Bayesian Inference},
  pages 1--21. PMLR.

\bibitem[Chopin, 2002]{chopin2002sequential}
Chopin, N. (2002).
\newblock A sequential particle filter method for static models.
\newblock {\em Biometrika}, 89(3):539--552.

\bibitem[Chopin and Papaspiliopoulos, 2020]{chopin2020introduction}
Chopin, N. and Papaspiliopoulos, O. (2020).
\newblock {\em An introduction to sequential Monte Carlo}.
\newblock Springer.

\bibitem[Cichocki and Amari, 2010]{cichocki2010families}
Cichocki, A. and Amari, S.-i. (2010).
\newblock Families of alpha-beta-and gamma-divergences: Flexible and robust
  measures of similarities.
\newblock {\em Entropy}, 12(6):1532--1568.

\bibitem[Cichocki et~al., 2011]{cichocki2011generalized}
Cichocki, A., Cruces, S., and Amari, S. (2011).
\newblock Generalized alpha-beta divergences and their application to robust
  nonnegative matrix factorization.
\newblock {\em Entropy}, 13(1):134--170.

\bibitem[Dai et~al., 2020]{dai2020invitation}
Dai, C., Heng, J., Jacob, P.~E., and Whiteley, N. (2020).
\newblock An invitation to sequential monte carlo samplers.

\bibitem[Dawid et~al., 2015]{dawid2015bayesian}
Dawid, A.~P., Musio, M., et~al. (2015).
\newblock Bayesian model selection based on proper scoring rules.
\newblock {\em Bayesian analysis}, 10(2):479--499.

\bibitem[Del~Moral et~al., 2006]{del2006sequential}
Del~Moral, P., Doucet, A., and Jasra, A. (2006).
\newblock Sequential monte carlo samplers.
\newblock {\em Journal of the Royal Statistical Society: Series B (Statistical
  Methodology)}, 68(3):411--436.

\bibitem[Fong and Holmes, 2020]{fong2020marginal}
Fong, E. and Holmes, C. (2020).
\newblock On the marginal likelihood and cross-validation.
\newblock {\em Biometrika}, 107(2):489--496.

\bibitem[Frazier et~al., 2021]{frazier2021loss}
Frazier, D.~T., Loaiza-Maya, R., Martin, G.~M., and Koo, B. (2021).
\newblock Loss-based variational bayes prediction.
\newblock {\em arXiv preprint arXiv:2104.14054}.

\bibitem[Fujisawa and Eguchi, 2008]{fujisawa2008robust}
Fujisawa, H. and Eguchi, S. (2008).
\newblock Robust parameter estimation with a small bias against heavy
  contamination.
\newblock {\em Journal of Multivariate Analysis}, 99(9):2053--2081.

\bibitem[Ghosh and Basu, 2016]{ghosh2016robust}
Ghosh, A. and Basu, A. (2016).
\newblock Robust bayes estimation using the density power divergence.
\newblock {\em Annals of the Institute of Statistical Mathematics},
  68(2):413--437.

\bibitem[Ghosh et~al., 2017]{ghosh2017generalized}
Ghosh, A., Harris, I.~R., Maji, A., Basu, A., Pardo, L., et~al. (2017).
\newblock A generalized divergence for statistical inference.
\newblock {\em Bernoulli}, 23(4A):2746--2783.

\bibitem[Hashimoto and Sugasawa, 2020]{hashimoto2020robust}
Hashimoto, S. and Sugasawa, S. (2020).
\newblock Robust bayesian regression with synthetic posterior distributions.
\newblock {\em Entropy}, 22(6):661.

\bibitem[Hyv{\"a}rinen, 2005]{hyvarinen2005estimation}
Hyv{\"a}rinen, A. (2005).
\newblock Estimation of non-normalized statistical models by score matching.
\newblock {\em Journal of Machine Learning Research}, 6(4).

\bibitem[Jacob et~al., 2020]{jacob2020unbiased}
Jacob, P.~E., O’Leary, J., and Atchad{\'e}, Y.~F. (2020).
\newblock Unbiased markov chain monte carlo methods with couplings.
\newblock {\em Journal of the Royal Statistical Society: Series B (Statistical
  Methodology)}, 82(3):543--600.

\bibitem[Jewson and Rossell, 2021]{jewson2021general}
Jewson, J. and Rossell, D. (2021).
\newblock General bayesian loss function selection and the use of improper
  models.
\newblock {\em arXiv preprint arXiv:2106.01214}.

\bibitem[Jewson et~al., 2018]{jewson2018principles}
Jewson, J., Smith, J.~Q., and Holmes, C. (2018).
\newblock Principles of bayesian inference using general divergence criteria.
\newblock {\em Entropy}, 20(6):442.

\bibitem[Kamatani, 2018]{kamatani2018efficient}
Kamatani, K. (2018).
\newblock Efficient strategy for the markov chain monte carlo in high-dimension
  with heavy-tailed target probability distribution.
\newblock {\em Bernoulli}, 24(4B):3711--3750.

\bibitem[Kingma and Ba, 2014]{kingma2014adam}
Kingma, D.~P. and Ba, J. (2014).
\newblock Adam: A method for stochastic optimization.
\newblock {\em arXiv preprint arXiv:1412.6980}.

\bibitem[Knoblauch et~al., 2018]{knoblauch2018doubly}
Knoblauch, J., Jewson, J., and Damoulas, T. (2018).
\newblock Doubly robust bayesian inference for non-stationary streaming data
  with $\beta$-divergences.
\newblock {\em arXiv preprint arXiv:1806.02261}.

\bibitem[Knoblauch et~al., 2019]{knoblauch2019generalized}
Knoblauch, J., Jewson, J., and Damoulas, T. (2019).
\newblock Generalized variational inference: Three arguments for deriving new
  posteriors.
\newblock {\em arXiv preprint arXiv:1904.02063}.

\bibitem[Matsubara et~al., 2022]{matsubara2021robust}
Matsubara, T., Knoblauch, J., Briol, F.-X., Oates, C., et~al. (2022).
\newblock Robust generalised bayesian inference for intractable likelihoods.
\newblock {\em Journal of the Royal Statistical Society: Series B}.

\bibitem[Middleton et~al., 2019]{middleton2019unbiased}
Middleton, L., Deligiannidis, G., Doucet, A., and Jacob, P.~E. (2019).
\newblock Unbiased smoothing using particle independent metropolis-hastings.
\newblock In {\em The 22nd International Conference on Artificial Intelligence
  and Statistics}, pages 2378--2387. PMLR.

\bibitem[Nakagawa and Hashimoto, 2020]{nakagawa2020robust}
Nakagawa, T. and Hashimoto, S. (2020).
\newblock Robust bayesian inference via $\gamma$-divergence.
\newblock {\em Communications in Statistics-Theory and Methods},
  49(2):343--360.

\bibitem[Neal, 2001]{neal2001annealed}
Neal, R.~M. (2001).
\newblock Annealed importance sampling.
\newblock {\em Statistics and computing}, 11(2):125--139.

\bibitem[Nielsen et~al., 2017]{nielsen2017holder}
Nielsen, F., Sun, K., and Marchand-Maillet, S. (2017).
\newblock On h{\"o}lder projective divergences.
\newblock {\em Entropy}, 19(3):122.

\bibitem[Reddi et~al., 2019]{reddi2019convergence}
Reddi, S.~J., Kale, S., and Kumar, S. (2019).
\newblock On the convergence of adam and beyond.
\newblock {\em arXiv preprint arXiv:1904.09237}.

\bibitem[Rousseeuw and Leroy, 2005]{rousseeuw2005robust}
Rousseeuw, P.~J. and Leroy, A.~M. (2005).
\newblock {\em Robust regression and outlier detection}, volume 589.
\newblock John wiley \& sons.

\bibitem[Shao et~al., 2019]{shao2019bayesian}
Shao, S., Jacob, P.~E., Ding, J., and Tarokh, V. (2019).
\newblock Bayesian model comparison with the hyv{\"a}rinen score: Computation
  and consistency.
\newblock {\em Journal of the American Statistical Association}.

\bibitem[Stigler, 1977]{stigler1977robust}
Stigler, S.~M. (1977).
\newblock Do robust estimators work with real data?
\newblock {\em The Annals of Statistics}, pages 1055--1098.

\bibitem[Warwick and Jones, 2005]{warwick2005choosing}
Warwick, J. and Jones, M. (2005).
\newblock Choosing a robustness tuning parameter.
\newblock {\em Journal of Statistical Computation and Simulation},
  75(7):581--588.

\end{thebibliography}

\appendix

\section{Proof of Proposition \ref{prop:monotones}} \label{sec:proofmonotone}
Recall that $\log\mathcal{L}^{\mathbf{R}}_{\gamma}(y_{i};\theta)=\frac{1}{\gamma}f_{\theta}(y_{i})^{\gamma}-\frac{1}{1+\gamma}\int f_{\theta}(x)^{1+\gamma}dx-\frac{1}{\gamma}+1$. 
First, $\frac{1}{\gamma}f_{\theta}(y_{i})^{\gamma}-\frac{1}{\gamma}$ is an increasing function of $\gamma$ for arbitrary $f_{\theta}(y_{i})$ and $\gamma>0$.
Furthermore, since we have assumed that $f_{\theta}(y_{i})\leq 1$, it holds that $f_{\theta}(x)^{1+\gamma_1}\geq f_{\theta}(x)^{1+\gamma_2}$ for $\gamma_1<\gamma_2$ and arbitrary $x$, so that $\frac{1}{1+\gamma}\int f_{\theta}(x)^{1+\gamma}dx$ is a decreasing function of $\gamma$.
Hence, $\log\mathcal{L}^{\mathbf{R}}_{\gamma}(y_{i};\theta)$ is increasing since it is a sum of two increasing functions.

\section{Derivation of \eqref{eq:H}} \label{sec:derivation}
For simplicity, we consider $d_y=1$, but the extension to the general dimension is straightforward. 
Let $p(y|\theta)$ be a general model with parameter $\theta$ and $p(y)=\int p(y|\theta)\Pi(\theta)d\theta$ be the marginal likelihood under prior $\Pi(\theta)$. 
Under the assumptions stated in supplement (S6) in \citet{shao2019bayesian}, the following identify always holds
\begin{align*}
&\sum_{k=1}^{d_y}\left\{2\frac{\partial^2\log p(y)}{\partial y_{(k)}^2}+\left( \frac{\partial \log p(y)}{\partial y_{(k)}}\right)^2\right\} \\
& \ \ \ \ \ \  
=\sum_{k=1}^{d_y}\left\{\mathbb{E}\left[2\frac{\partial^2\log p(y\mid\theta)}{\partial y_{(k)}^2}+ \left( \frac{\partial \log p(y\mid\theta)}{\partial y_{(k)}}\right)^2\mid y \right]
-\left(\mathbb{E}\left[\frac{\partial \log p(y\mid \theta)}{\partial y_{(k)}}\mid y \right] \right)^2\right\},
\end{align*}
where the expectation is taken with respect to the posterior distribution of $\theta$ given $y$.
Using the above identity with $p(y)=p(y_i|y_{-i})$, $p(y|\theta)=p(y_i|\theta,y_{-i})$ and $\Pi(\theta)=\Pi(\theta|y_{-i})$, where $y_{-i}=(y_1,...,y_{i-1},y_{i+1},...,y_n)$, we have 
\begin{align*}
&\sum_{k=1}^{d_y}\left\{2\frac{\partial^2\log p(y_i\mid y_{-i})}{\partial y_{i(k)}^2}+\left( \frac{\partial \log p(y_i\mid y_{-i})}{\partial y_{i(k)}}\right)^2\right\}\\
& \ \ \ \ \ 
=\sum_{k=1}^{d_y}\bigg\{\mathbb{E}\left[2\frac{\partial^2\log p(y_i\mid y_{i-1},\theta)}{\partial y_{i(k)}^2}
+\left( \frac{\partial \log p(y_i\mid y_{i-1},\theta)}{\partial y_{i(k)}}\right)^2\mid y_i,y_{i-1} \right]\\
& \ \ \ \ \ \ \ \ \ 
-\left(\mathbb{E}\left[\frac{\partial \log p(y_i\mid y_{i-1},\theta)}{\partial y_{i(k)}}\mid y_i,y_{i-1} \right] \right)^2\bigg\}.
\end{align*}
Notice that, since we have assumed $i.i.d.$ observations, we have $p(y_i\mid y_{i-1},\theta)=p(y_i\mid \theta)$.
Hence, the expression (\ref{eq:H}) follows by setting $p(y_i\mid \theta)=\mathcal{L}_{\gamma}(y_{i};\theta)$.

\section{Proof of Proposition \ref{pro:consistency}} \label{sec:proofconsistency}
The proof here is essentially the same as \citet{shao2019bayesian} and \citet{jewson2021general}, so we only provide an overview of the proof.
Assume that, for simplicity, $d_y=1$.
First one can show that $\mathcal{H}_n(\gamma)$ can be decomposed into the sum of conditional expectation terms of $\mathcal{H}(y_i,p_{\gamma}(y_i\mid\theta,y_{-i}))=\mathcal{H}(y_i,p_{\gamma}(y_i\mid\theta))$, and the sum of conditional variance terms of $\frac{\partial\log p_\gamma(y_i\mid\theta)}{\partial y_i}$, see \citet{dawid2015bayesian,shao2019bayesian}. 
Then under the assumptions stated in supplement of \cite{shao2019bayesian}, the variance term will converge at $0$ w.p.1.
Let $(\mathbf{B},\|\cdot\|)$ be the space of continuous real functions on the compact set of $\gamma$ equipped with the sup-norm.
Then, under the same assumptions, $\frac{1}{n}\sum_{i=1}^n\mathbb{E}_{\gamma}[\mathcal{H}(y_i,p_{\gamma}(y_i\mid\theta))]$ may take values in this space.
Then strong law of large numbers on a separable Banach space \citep{azlarov1982laws,beskos2009monte}, may be applied to $\frac{1}{n}\sum_{i=1}^n\mathbb{E}_{\gamma}[\mathcal{H}(y_i,p_{\gamma}(y_i|\theta))]-\mathbb{E}_g[\mathcal{H}(y_1,p_{\gamma}(y_1\mid\theta))]$.
Combining this result with (s10) in supplement of \cite{shao2019bayesian} and integration by parts \citep{hyvarinen2005estimation,dawid2015bayesian} would yield $\lim_n\sup_{\gamma}\frac{1}{n}\mathcal{H}_n(\gamma)=\mathcal{J}(\gamma)$ w.p.1.
The result follows under the assumption for identification such that $\mathcal{J}(\gamma)$ is only maximised at $\gamma^{\star}$, see A1. in \cite{jewson2021general} for instance.

\section{Derivatives of the H-score} \label{sec:derivative_Hscore}
In the following argument, we assume the exchangeability of integral and derivative without any remarks.
For simplicity, we consider a univariate case, namely $d_y=1$.
We define $\mathcal{D}_{\gamma}(y_{1:n};\theta):=\log\mathcal{L}_{\gamma}(y_{1:n};\theta)$ and $\mathcal{D}_{\gamma}(y_i;\theta):=\log\mathcal{L}_{\gamma}(y_i;\theta)$.  
The derivative of the H-score with respect to $\gamma$ is expressed as 
\begin{align*}
\frac{d}{d\gamma}\mathcal{H}_{n}(\gamma)
&=2\sum_{i=1}^n\int\frac{d}{d\gamma}\left[\left\{\frac{\partial^{2}\mathcal{D}_{\gamma}(y_i;\theta)}{\partial^{2}y_i}+\left(\frac{\partial \mathcal{D}_{\gamma}(y_i;\theta)}{\partial y_i}\right)^{2}\right\}\Pi_{\gamma}(\theta\mid y_{1:n})\right]d\theta\\
&-2\sum_{i=1}^n\mathbb{E}\left[\frac{\partial \mathcal{D}_{\gamma}(y_i;\theta)}{\partial y_i}\mid y_{1:n}\right]
\times \int\frac{d}{d\gamma}\left\{\frac{\partial \mathcal{D}_{\gamma}(y_i;\theta)}{\partial y_i}\Pi_{\gamma}(\theta\mid y_{1:n})\right\}d\theta,
\end{align*}
which requires the computation of integral of the following form:
\begin{equation}\label{eq:H-deriv}
\int \frac{d}{d\gamma}\Big\{C_{\gamma}^{(k)}(y_i;\theta)\Pi_{\gamma}(\theta\mid y_{1:n})\Big\}d\theta
=
\int \frac{d}{d\gamma}\left\{C_{\gamma}^{(k)}(y_i;\theta)\frac{e^{\mathcal{D}_{\gamma}(y_{1:n};\theta)}\pi(\theta)}{\int e^{\mathcal{D}_{\gamma}(y_{1:n};\theta)}\pi(\theta)d\theta}\right\}d\theta,
\end{equation}
where 
\begin{equation}\label{C}
C_{\gamma}^{(1)}(y_i;\theta)=\frac{\partial^{2}\mathcal{D}_{\gamma}(y_i;\theta)}{\partial^{2}y_i}+\left(\frac{\partial \mathcal{D}_{\gamma}(y_i;\theta)}{\partial y_i}\right)^{2}, \ \ \ \ 
C_{\gamma}^{(2)}(y_i;\theta)=\frac{\partial \mathcal{D}_{\gamma}(y_i;\theta)}{\partial y_i},
\end{equation}
and $\mathcal{D}_{\gamma}(y_{1:n};\theta):=\log\mathcal{L}_{\gamma}(y_{1:n};\theta)$.
It follows that 
\begin{align*}
& \ \ \ 
\int \frac{d}{d\gamma}\left\{C_{\gamma}^{(k)}(y_i;\theta)\frac{e^{\mathcal{D}_{\gamma}(y_{1:n};\theta)}\pi(\theta)}{\int e^{\mathcal{D}_{\gamma}(y_{1:n};\theta)}\pi(\theta)d\theta}\right\}d\theta\\
&=
\int \left\{\frac{d}{d\gamma}C_{\gamma}^{(k)}(y_i;\theta)\right\}\Pi_{\gamma}(\theta\mid y_{1:n})d\theta
+
\int C_{\gamma}^{(k)}(y_i;\theta)\left\{\frac{d}{d\gamma}\mathcal{D}_{\gamma}(y_{1:n};\theta)\right\}\Pi_{\gamma}(\theta\mid y_{1:n})d\theta\\
& \ \ - \int C_{\gamma}^{(k)}(y_i;\theta)\frac{e^{\mathcal{D}_{\gamma}(y_{1:n};\theta)}\pi(\theta)}{\left\{\int e^{\mathcal{D}_{\gamma}(y_{1:n};\theta)}\pi(\theta)d\theta\right\}^2}\times \int \pi(\theta)e^{\mathcal{D}_{\gamma}(y_{1:n};\theta)}\left\{\frac{d}{d\gamma}\mathcal{D}_{\gamma}(y_{1:n};\theta)\right\}d\theta,
\end{align*}
where the third term is further simplified to 
\begin{align*}
\int C_{\gamma}^{(k)}(y_i;\theta)\Pi_{\gamma}(\theta\mid y_{1:n}) d\theta
\times
\int \left\{\frac{d}{d\gamma}\mathcal{D}_{\gamma}(y_{1:n};\theta)\right\}\Pi_{\gamma}(\theta\mid y_{1:n}) d\theta.
\end{align*}
Hence, the derivative (\ref{eq:H-deriv}) is expressed as 
\begin{equation}
\begin{split}
&\mathbb{E}\left[ \left\{\frac{d}{d\gamma}C_{\gamma}^{(k)}(y_i;\theta)\right\} 
+
C_{\gamma}^{(k)}(y_i;\theta)\left\{\frac{d}{d\gamma}\mathcal{D}_{\gamma}(y_{1:n};\theta)\right\}
\mid y_{1:n}\right]\\
&  \ \ \ \ 
-\mathbb{E}\left[ C_{\gamma}^{(k)}(y_i;\theta) \mid y_{1:n}\right]\times
\mathbb{E}\left[\frac{d}{d\gamma}\mathcal{D}_{\gamma}(y_{1:n};\theta) \mid y_{1:n}\right].
\end{split}
\end{equation}
Finally, the derivative of the H-score is expressed as 
\begin{equation}\label{derive-H-score}
\begin{split}
\frac{d}{d\gamma}\mathcal{H}_{n}(\gamma)&=2\sum_{i=1}^n \ \mathbb{E}\left[ \left\{\frac{d}{d\gamma}C_{\gamma}^{(1)}(y_i;\theta)\right\} 
+
C_{\gamma}^{(1)}(y_i;\theta)\left\{\frac{d}{d\gamma}\mathcal{D}_{\gamma}(y_{1:n};\theta)\right\}
\mid y_{1:n}\right]\\
&  \ \ \ \ 
- 2\sum_{i=1}^n \mathbb{E}\left[ C_{\gamma}^{(1)}(y_i;\theta) \mid y_{1:n}\right]  
\mathbb{E}\left[\frac{d}{d\gamma}\mathcal{D}_{\gamma}(y_{1:n};\theta) \mid y_{1:n}\right]\\
& -2\sum_{i=1}^n \mathbb{E}\left[\frac{d}{dy_i}\mathcal{D}_{\gamma}(y_i;\theta) \mid y_{1:n}\right]\mathbb{E}\left[ \left\{\frac{d}{d\gamma}C_{\gamma}^{(2)}(y_i;\theta)\right\}+C_{\gamma}^{(2)}(y_i;\theta)\left\{\frac{d}{d\gamma}\mathcal{D}_{\gamma}(y_{1:n};\theta)\right\}
\mid y_{1:n}\right]\\
&  \ \ \ \ 
+ 2\sum_{i=1}^n \mathbb{E}\left[\frac{d}{dy_i}\mathcal{D}_{\gamma}(y_i;\theta) \mid y_{1:n}\right] \mathbb{E}\left[ C_{\gamma}^{(2)}(y_i;\theta) \mid y_{1:n}\right]  
\mathbb{E}\left[\frac{d}{d\gamma}\mathcal{D}_{\gamma}(y_{1:n};\theta) \mid y_{1:n}\right],
\end{split}
\end{equation}
where $C^{(1)}$ and $C^{(2)}$ are defined in (\ref{C}).

\subsection{General case}\label{sec:dH-general}

Let $f(y_i;\theta)$ be a parametric density of $y_i$.
The density power divergence \citep{basu1998robust} is 
$$
\mathcal{D}_{\gamma}(y_i;\theta)=\frac1{\gamma}f(y_i;\theta)^{\gamma}-\frac1{1+\gamma}\int f(t;\theta)^{1+\gamma}dt,
$$
noting that the second term is irrelevant in the computation of the H-score since it does not depend on $y_i$.
The detailed expressions of the quantities that appear in the derivative of the H-score in (\ref{derive-H-score}) are obtained as follows:
\begin{align*}
&C_{\gamma}^{(1)}(y_i;\theta)=(\gamma-1)f(y_i;\theta)^{\gamma-2}f'(y_i;\theta)^2+f(y_i;\theta)^{\gamma-1}f''(y_i;\theta)+f(y_i)^{2(\gamma-1)}f'(y_i)^2\\
&C_{\gamma}^{(2)}(y_i;\theta)=f(y_i;\theta)^{\gamma-1}f'(y_i;\theta), \\
&\frac{\partial}{\partial\gamma}\mathcal{D}_{\gamma}(y_{1:n};\theta)
=\frac1{\gamma^2}\sum_{i=1}^nf(y_i;\theta)^{\gamma}\left\{\gamma \log f(y_i;\theta) - 1\right\}\\
& \ \ \ \ \ \ \ \ \ \ 
+\frac{n}{(1+\gamma)^2}\int f(t;\theta)^{1+\gamma}dt 
-\frac{n}{1+\gamma}\int f(t;\theta)^{1+\gamma}\log f(t;\theta)dt\\
&\frac{\partial}{\partial\gamma}C_{\gamma}^{(1)}(y_i;\theta)
=f(y_i;\theta)^{\gamma-2}f'(y_i;\theta)^2+\left\{C_{\gamma}^{(1)}+f(y_i;\theta)^{2(\gamma-1)}f'(y_i;\theta)\right\}\log f(y_i;\theta)\\
&\frac{\partial}{\partial\gamma}C_{\gamma}^{(2)}(y_i;\theta)
=C_{\gamma}^{(2)}(y_i;\theta)\log f(y_i;\theta)
\end{align*}

\subsection{Normal distribution case}
When $y_i\sim \mathcal{N}(\mu, \sigma^2)$, the corresponding density power divergence is 
$$
\mathcal{D}_{\gamma}(y_i;\theta)=\gamma^{-1}\phi(y_i;\mu,\sigma^2)^{\gamma}-(2\pi\sigma^2)^{-\gamma/2}{(1+\gamma)^{-3/2}}.
$$
The detailed expressions of quantities appeared in the derivative of the H-score in (\ref{derive-H-score}) are obtained as follows:
\begin{align*}
&C_{\gamma}^{(1)}(y_i;\theta)
=\frac1{\sigma^4}\left[w_i\left\{\gamma(y_i-\mu)^2-\sigma^2\right\}+w_i^2(y_i-\mu)^2\right], \ \ \ \ \ 
C_{\gamma}^{(2)}(y_i;\theta)= \frac{d}{dy_i}\mathcal{D}_{\gamma}(y_i;\theta)=-\frac{w_i(y_i-\mu)}{\sigma^2},\\
&\frac{d}{d\gamma}\mathcal{D}_{\gamma}(y_{1:n};\theta)
=
\frac1{\gamma^2}\sum_{i=1}^nw_i\left\{\gamma\log \phi(y_i;\mu, \sigma^2) -1\right\}
+ \frac{n}{2} (2\pi\sigma^2)^{-\gamma/2}(1+\gamma)^{-5/2}\left\{(1+\gamma)\log(2\pi\sigma^2)+3 \right\},\\
&\frac{\partial}{\partial\gamma}C_{\gamma}^{(1)}(y_i;\theta)
=\frac1{\sigma^4}\Big[w_i\left\{\gamma(y_i-\mu)^2-\sigma^2\right\}\log \phi(y_i;\mu, \sigma^2) + w_i(y_i-\mu)^2 + 2w_i^2(y_i-\mu)^2\log \phi(y_i;\mu, \sigma^2)\Big],\\
&\frac{\partial}{\partial\gamma}C_{\gamma}^{(2)}(y_i;\theta)
=
-\frac{w_i(y_i-\mu)}{\sigma^2}\log \phi(y_i;\mu, \sigma^2), 
\end{align*}
where $w_i=\phi(y_i;\mu, \sigma^2)^{\gamma}$. When $y_i\sim \mathcal{N}(x_i^\top\beta,\sigma^2)$, the derivative of the H score in the model is obtained by replacing $\mu$ with $x_i^\top\beta$.

\end{document}